\documentclass{jams-l}

\def\anonymize{}

\usepackage{amsmath, amssymb, amsthm}
\usepackage[dvipsnames]{xcolor}
\usepackage[colorlinks=true,
            linkcolor=BrickRed,
            urlcolor=MidnightBlue,
            citecolor=BrickRed,
            filecolor=black]{hyperref}
\usepackage[capitalize,nameinlink]{cleveref} 

\usepackage{cleveref}
\usepackage{csquotes}
\usepackage{mathtools}

\newtheorem{theorem}{Theorem}
\newtheorem*{theorem*}{Theorem}
\newtheorem*{claim*}{Claim}
\newtheorem{lemma}[theorem]{Lemma}
\newtheorem{claim}[theorem]{Claim}
\newtheorem{corollary}[theorem]{Corollary}
\newtheorem*{definition}{Definition}
\newtheorem*{remark}{Remark}

\newcommand{\rank}{\mathrm{rank}}

\newcommand{\N}{\mathbb{N}}

\newcommand{\R}{\mathbb{R}}

\newcommand{\cost}{\mathsf{space}}

\newcommand{\state}[1]{s^{(#1)}}

\newcommand{\smap}[1]{u^{(#1)}}
\newcommand{\outmap}[1]{m^{(#1)}}

\title{On the space complexity of online convolution}

\ifdefined\anonymize
\author[Andersson]{Joel Daniel Andersson}
\address{Institute of Science and Technology Austria (ISTA)}
\email{joel.andersson@ist.ac.at}

\author[Yehudayoff]{Amir Yehudayoff}
\address{Department of Computer Science, The University of Copenhagen,
and Department of Mathematics, Technion-IIT}
\email{amir.yehudayoff@gmail.com}

\thanks{J.D.A.\ performed this work while at University of Copenhagen, supported by Providentia, a Data Science Distinguished Investigator grant from Novo Nordisk Fonden, and BARC (Villum Investigator grant 54451). A.Y.\ is supported by a DNRF chair grant and BARC}
\else
\author{}
\fi

\date{}

\begin{document}
\begin{abstract}
We study a discrete convolution streaming problem. An input arrives as a stream of numbers $z = (z_0,z_1,z_2,\ldots)$,
and at time $t$ our goal is to output $(Tz)_t$ where $T$ is a lower-triangular Toeplitz matrix. 
We focus on space complexity;
we define a model for studying the memory-size of online continuous algorithms. 
In this model,
algorithms store a buffer of
$\beta(t)$ numbers in order to achieve their goal.

We characterize space complexity using the language of generating functions. 
The matrix $T$ corresponds to a generating function~$G(x)$. 
When $G(x)$ is rational of degree $d$,
it is known that the 
space complexity is at most~$O(d)$.
We prove a corresponding lower bound; the space complexity is at least~$\Omega(d)$.
In addition, for irrational $G(x)$, we prove that the space complexity is infinite. 
We also provide finite-time guarantees.
For example, for the generating function $\frac{1}{\sqrt{1-x}}$ that was studied in various previous works in the context of differentially private continual counting,
we prove a sharp lower bound on the space complexity; at time $t$, it is at least $\Omega(t)$.
\end{abstract}
\maketitle

\section{Introduction}
Streaming algorithms process data as it arrives while maintaining only a limited buffer of past information.
A fundamental problem in this context is the efficient computation of structured linear transformations.
For a linear transformation~$T$,
given the stream $z = (z_0,z_1,\ldots)$, the goal at time $t$ is to compute~$(Tz)_t$.
We focus on the case that $T$ is a lower-triangular Toeplitz matrix.
Such matrices correspond to discrete convolutions and they appear in a variety of applications, including signal processing, time series analysis, and differential privacy (see e.g.~\cite{brent1999stability,CohenS03,dwork_differential_2010,henzinger2024unifying} and references therein).

We suggest the following streaming model (a formal definition follows). 
The state $s^{(t)}$ of the algorithm at time $t$ is a buffer of $\beta(t)$ numbers. 
When the algorithm receives $z_{t+1}$,
it updates the state $(s^{(t)},z_{t+1}) \mapsto s^{(t+1)}$. 
The algorithm is correct at time $t$ if there is a way to recover $(Tz)_t$ from $s^{(t)}$. 
Our objective is to characterize the minimum space (buffer size) required to solve this task for a given $T$. 

The complexity of this problem is naturally linked to the structure of the generating function $G$ associated with $T$.
It is known~\cite{dj2024efficient} that when $G(x)$ is a rational function of degree $d$, the space complexity is at most $O(d)$. Prior to our work, no corresponding lower bound was known.
We close the gap and establish a matching lower bound of $\Omega(d)$.
Furthermore, we analyze the case of irrational $G(x)$ and prove that for this case, the space complexity is infinite. Our results classify the space complexity for infinite streams based on the rationality-degree of~$G(x)$.

What about finite streams?
For a finite stream of length $t$, we prove that certain generating functions, such as $G_{1/2}(x) = 1/\sqrt{1-x}$, require space at least~$\Omega(t)$.
This gives a formal explanation for a phenomenon observed in differentially private (DP) continual counting \cite{dwork_differential_2010, chan_private_2011}, a central primitive in DP machine learning where space efficiency is critical \cite{kairouz_practical_2021, denissov_improved_2022}.
A leading-constant optimal algorithm for the problem~\cite{fichtenberger2023constant,henzinger_almost_2023} relies on solving the streaming task for the corresponding $T=T[G_{1/2}]$.
The apparent difficulty of solving $T[G_{1/2}]$ with small buffer size motivated subsequent work on space-accuracy tradeoffs~\cite{andersson2024smooth,dj2024efficient,binning2025}.
Our results show that this difficulty is inherent in the exact streaming model, and help explain why such tradeoffs are necessary.

\subsection{The model}

For a sequence $a = (a_0,a_1,\ldots) \in \R^\N$ of real numbers,\footnote{In this text $\N = \{0,1,2,\ldots\}$; matrices and vectors are zero-indexed.}
denote by $T = T[a]$ the lower-triangular Toeplitz matrix
\begin{equation*}
    T_{i,j} = 
    \begin{cases} 
        a_{i-j} & \text{if } i \geq j, \\ 
        0 & \text{if } i < j.
    \end{cases}
\end{equation*}
Given the stream $z = (z_0,z_1,z_2,\ldots) \in \R^\N$,
at time $t \in \N$ we want to compute
$$(Tz)_t$$
which depends only on the history $z_0,\ldots,z_t$.
For example, for $a = (1,1,1,\ldots)$,
we get the problem of releasing all prefix-sums of the input, which is also known as \emph{continual counting}. 

This work focuses on the space complexity of this streaming problem.
Because we are working over the real numbers, we suggest the following computational model.
The model makes sense for all matrices $M$ (and not just for lower-triangular Toeplitz matrices).

The model we consider requires exact computation over $\R$. Computations in Euclidean space become more and more relevant as the role of machine learning and optimization grows. Neural networks are one standard model for computation over $\R$, which supports linear operations as well as many other (continuous) activation functions. 
Another standard model for computations over $\R$ is the Blum–Shub–Smale model, which supports algebraic operations as well as boolean decision making (in the form of inequalities).
A third standard model for computations over $\R$ is that of algebraic circuits.

We suggest the following model for online computations of $\R$.
As most online models, our model highlights memory-size as the main computational cost. 
The model supports all continuous operations but does not  support ``bit access'' to the input.
Concrete motivation for studying this model comes from differential privacy (see \Cref{sec:app} for more details). We believe that variants of this model can lead to interesting insights into other types of continuous online algorithms.

\begin{definition}[streaming model]
A streaming algorithm is a sequence of maps $m^{(t)}$ and $u^{(t)}$ as follows.
Let $\beta : \{-1\} \cup \N \to \N$ be a buffer size function with $\beta(-1) = 1$.
Denote the state at time $t$ by
$\state{t} \in \R^{\beta(t)}$.
Initially, the state is $\state{-1} = 0 \in \R$.
Upon receiving $z_{t}$, 
the algorithm updates its internal state:
    \begin{equation*}
        \state{t} = \smap{t}(\state{t-1}, z_{t}),
    \end{equation*}
    where $\smap{t} : \mathbb{R}^{\beta(t-1)} \times \R \to \R^{\beta(t)}$.
The algorithm then outputs 
    \begin{equation*}
m^{(t)}(\state{t})
    \end{equation*}
    where $\outmap{t}: \R^{\beta(t)} \to \R$.
    We assume that the state maps are continuous; for each $t$,
   the map $(z_0,\ldots,z_t) \mapsto \state{t}$ is assumed to be continuous.
    The algorithm solves the streaming problem defined by 
$M\in\mathbb{R}^{\N \times \N}$ if for every $z$ and every $t$,
$$m^{(t)}(\state{t}) = (M z)_t\,.$$
\end{definition}

The model above assumes that the state maps are continuous (the output maps $m^{(t)}$ are not assumed to be continuous). For computations over $\R$, this assumption is natural and allows the use of algebraic operations $+, \times$ as well as any other continuous function like trigonometric functions, max operations and exponential functions. 
Continuous computations, however, are restricted 
and cannot perform ``arbitrary operations''. 

A continuity restriction on the state maps is necessary because without it every problem can be solved with a buffer size of two.
The reason is that we can encode $z = (z_0,\ldots,z_t) \in \R^{t+1}$ using a pair $(r,x) \in \R \times [0,1]$ as follows.
Let $\varphi:[0,1] \to [-1,1]^{t+1}$ be a continuous space-filling curve. For each non-zero $z \in \R^{t+1}$,
let $r = \|z\|_2$
and let $x \in [0,1]$ be so that $\varphi(x) = \frac{z}{r}$ (for $z=0$, define $r=0$ and $x=0$). 
The map $z \mapsto (r,x)$ is not continuous, but
we can recover $z$ from $(r,x)$ using the continuous map $(r,x) \mapsto r \cdot \varphi(x)$.

\begin{remark}
The lower-bound mechanism we develop can be extended to approximate computation as long as the maps $m^{(t)},u^{(t)}$ are assumed to be Lipschitz and when the underlying matrix $($see \Cref{lemma:small_buffer_small_rank}$)$ has many large eigenvalues. 
The lower bounds can also be extended to a stronger model in which, at time $t$, the state update function takes as input all the future part of $z$ $($that is, $(z_{t'})_{t'\geq t})$ and not just $z_t$.
   This stronger model is relevant, for example, for the privacy setting where $z$ is a noise vector that can be sampled as the algorithm sees fit. 
   See \Cref{sec:future} for more details.
\end{remark}

We think of the buffer size $\beta$ as the memory cost of the algorithm.
As is standard in computational complexity theory,
algorithms have costs and problems have complexities.

\begin{definition}
For time $t$ and a time bound $\tau \geq t$,
define $\cost_M(t;\tau)$
as the minimum value $b$ such that there is a
streaming algorithm with buffer size $\beta$
that computes $M$ correctly up to time $\tau$
and $b \geq \max_{t' \leq t} \beta(t')$.
For infinite streams,
the space complexity $\cost_M(t)$ is
defined as $\cost_M(t)=\cost_M(t;\infty)$.
For a sequence $a$ that defines the Toeplitz matrix $T = T[a]$, we use the notation
$\cost_a(t, \tau) \coloneqq \cost_{T[a]}(t, \tau)$.
\end{definition}

Before moving to our lower bounds, we address a couple of basic issues.  
There are two natural ways to decompose the target matrix $M$: we can write $M$ as $M = LR$
and we can write $M$ as $M = A+B$.
The former corresponds to breaking the streaming problem into two problems sequentially, and the latter into two problems in parallel.
The following lemma shows that such a decomposition does not allow for saving on resources.

\begin{lemma}[decompositions do not help] \label{lem:robust-decomposition}
   Let $M$ be a 
   lower-triangular matrix and let $t,I \in \N$.
   If $M = LR$ for lower-triangular matrices $L,R$ then 
   \begin{equation*}
       \cost_L(t;t+I) + \cost_R(t;t+I) \geq \cost_M(t;t+I)\,.
   \end{equation*}
    If $M = A + B$ for lower-triangular matrices $A,B$ then
   \begin{equation*}
       \cost_A(t;t+I) + \cost_B(t;t+I) \geq \cost_M(t;t+I)\,.
   \end{equation*}
\end{lemma}

We also observe the following basic property of algebraic algorithms
(we do not rely on it, but we find it informative). An algorithm is called algebraic if for all $t$, the maps $m^{(t)}$ and $\smap{t}$ are polynomial maps.

\begin{lemma}[algebraic algorithms are linear]
\label{lem:lin}
If the algorithm defined by the maps $\{ (\smap{t}, \outmap{t}) : t\in\N\}$ always outputs a linear function of $z$,
then there is an algorithm that uses only linear maps
that computes the same function of $z$.
\end{lemma}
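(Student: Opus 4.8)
\medskip
\noindent\emph{Proof plan.} The plan is to track only the ``linear part'' of the state as a function of the input and to show that this part already evolves by a linear rule. Fix a stream $z=(z_0,z_1,\ldots)$ and a scalar $\lambda\in\R$, and run the given algebraic algorithm on the rescaled stream $\lambda z=(\lambda z_0,\lambda z_1,\ldots)$. Since every $\smap{t}$ is a polynomial map and $\state{t}$ depends only on $z_0,\ldots,z_t$, the resulting state, which we write $\state{t}(\lambda z)$, is a polynomial in $\lambda$ with coefficients in $\R^{\beta(t)}$. Define the new state
\[
\tilde s^{(t)}(z):=\tfrac{d}{d\lambda}\big|_{\lambda=0}\,\state{t}(\lambda z),
\]
that is, the coefficient of $\lambda^1$; grouping the polynomial for $\state{t}$ by total degree shows $\tilde s^{(t)}(z)$ is a linear function of $z$, and it lives in the same space $\R^{\beta(t)}$, so the buffer size is unchanged. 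Also write $c^{(t)}:=\state{t}(0)$ for the state reached on the all-zeros stream, a fixed vector (independent of $z$) with $c^{(-1)}=0$.

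First I would derive the update rule. Differentiating the identity $\state{t}(\lambda z)=\smap{t}\big(\state{t-1}(\lambda z),\lambda z_t\big)$ with respect to $\lambda$ at $\lambda=0$, the chain rule gives $\tilde s^{(t)}(z)=D\smap{t}\big(c^{(t-1)},0\big)\cdot\big(\tilde s^{(t-1)}(z),\,z_t\big)$, where $D\smap{t}(c^{(t-1)},0)$ is a fixed Jacobian not depending on $z$. Writing this Jacobian in block form, $\tilde s^{(t)}(z)=\A{t}\,\tilde s^{(t-1)}(z)+z_t\,b^{(t)}$ for a fixed matrix $\A{t}$ and a fixed vector $b^{(t)}$. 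Thus, taking the update map $(s,w)\mapsto\A{t}s+w\,b^{(t)}$ --- which is linear --- and noting $\tilde s^{(-1)}(z)=0=\state{-1}$, an induction on $t$ shows that the new algorithm's state at time $t$ is exactly $\tilde s^{(t)}(z)$.

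For the output map, I would use the hypothesis that $m^{(t)}(\state{t}(z))$ is a linear function $L_t(z)$ of $z$; on input $\lambda z$ it therefore equals $L_t(\lambda z)=\lambda\,L_t(z)$. Differentiating at $\lambda=0$ gives $D\outmap{t}\big(c^{(t)}\big)\cdot\tilde s^{(t)}(z)=L_t(z)$, so the linear functional $s\mapsto D\outmap{t}(c^{(t)})\cdot s$ is the desired output map: the resulting algorithm uses only linear maps and outputs $L_t(z)$ for every $z$ and every $t$. (If ``linear'' is read in the affine sense, one instead takes $s\mapsto D\outmap{t}(c^{(t)})\cdot s+L_t(0)$, which is still affine.)

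I do not expect a genuine obstacle --- the content is the scaling-and-differentiation bookkeeping above. The one point worth stating carefully is that being polynomial guarantees $\smap{t}$ and $\outmap{t}$ are differentiable, and that the $\lambda^1$-coefficient of a polynomial obtained by composing polynomials depends only on the constant and linear coefficients of its inputs; this is precisely what makes the recursion for $\tilde s^{(t)}$ close up without any reference to the higher-degree parts of $\state{t}$.
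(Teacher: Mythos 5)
Your proof is correct and follows essentially the same route as the paper: your $\tilde s^{(t)}(z)=\frac{d}{d\lambda}\big|_{\lambda=0}\state{t}(\lambda z)$ is exactly the homogeneous degree-one part $\mathcal{H}_1(\state{t})$ that the paper tracks, and your Jacobians $D\smap{t}(c^{(t-1)},0)$, $D\outmap{t}(c^{(t)})$ evaluated along the zero-input trajectory are precisely the linear maps the paper extracts in its composition lemma. The only difference is presentational (chain rule and scaling versus formal extraction of homogeneous parts), and your handling of the output map via linearity of $L_t$ matches the paper's use of the hypothesis.
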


\subsection{Generating functions}
Let us go back to the setting of Toeplitz matrices. 
It is convenient and useful to think of the underlying sequence $a$ as a generating function.
The sequence $a$ corresponds to the (ordinary) generating function $G = G[a]$ defined by
$$G(x) = \sum_{k=0}^\infty a_k x^k.$$
This expression is just a formal infinite sum.
For more background on generating functions, see e.g.~\cite{lando2003lectures,wilf2005generatingfunctionology}.

Let us consider some examples.
The first example is the continual counting generating function 
$$G_1(x)= \sum_{k=0}^\infty x^k.$$
The space complexity of $G_1$ is $1$
because the algorithm can just update the memory
by adding the observed $z_t$.
A second example is the ``exponential counting'' function
$$G_{e}(x) = \sum_{k=0}^\infty 2^k x^k.$$
The space complexity of $G_e$ is $1$ as well 
(multiply the memory by two,
and then add the observed $z_t$).
The third example is 
$$G_{1/2}(x) = \sum_{k=0}^\infty 2^{-2k} \binom{2k}{k} x^k;$$
motivation for studying this function comes from continual counting in the DP framework~\cite{henzinger_almost_2023,fichtenberger2023constant}.
The fourth example is 
$$G_{C}(x) = \sum_{k=0}^\infty C_k x^k,$$
where $C_k$ is the $k$'th Catalan number. 
The trivial upper bound for the space complexities of both $G_{1/2}$ and $G_C$
is $\cost(t) \leq t+1$.
Is there a clever way to significantly reduce
the buffer size?
It turns out that the property that qualitatively governs the space complexity is rationality.

\begin{definition}
A polynomial $Q(x)$ of degree $d$ is an expression of the form
$Q(x) = \sum_{k=0}^{d} q_k x^k$ where $q_d \neq 0$.
The polynomial $Q(x)$ is invertible if $q_0 \neq 0$.
The generating function $G$ is a rational function of degree at most $d$ if there are polynomials $P,Q$ with $Q$ invertible 
so that $Q(x) \cdot G(x) = P(x)$
where $\deg(Q) \leq d$ and $\deg(P) \leq d-1$. 
We say that $G$ has degree $d$
if it is of degree at most $d$ but not of degree at most $d-1$.
We say that $G$ is irrational if it is not rational (of any degree). 
\end{definition}

Let us go back to the four examples we introduced:
\begin{align*}
 G_1(x) & =  \frac{1}{1-x} , \\
 G_{e}(x) & =  \frac{1}{1-2x} , \\ G_{1/2}(x) & = \frac{1}{\sqrt{1-x}} , \\
 G_{C}(x) & =  \frac{1-\sqrt{1-4x}}{2x} .
\end{align*}
We see that $G_1,G_e$ are degree one rational functions; as simple as they get. 
What about $G_C$ and $G_{1/2}$?

\begin{claim*} Both
$G_{1/2}$ and $G_C$ are irrational. 
\end{claim*}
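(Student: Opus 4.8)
The plan is to argue by contradiction: suppose $G(x) = P(x)/Q(x)$ with $Q$ invertible (so $q_0 \neq 0$) and $\deg Q \le d$. Then $Q(x) G(x) = P(x)$ is a polynomial, which means that for all sufficiently large $k$ (namely $k > \deg P$), the coefficient of $x^k$ in $Q(x) G(x)$ vanishes. Writing $Q(x) = \sum_{j=0}^d q_j x^j$ and $G(x) = \sum_k a_k x^k$, this says $\sum_{j=0}^d q_j a_{k-j} = 0$ for all large $k$; that is, the coefficient sequence $(a_k)$ satisfies a linear recurrence of order $d$ with constant coefficients. Equivalently, $a_k$ is eventually of the form $\sum_i p_i(k) \lambda_i^k$ for finitely many roots $\lambda_i$ of the reversed polynomial and polynomials $p_i$. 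So it suffices to show that the coefficient sequences of $G_{1/2}$ and $G_C$ do \emph{not} satisfy any such linear recurrence — equivalently, that their growth/ratio behaviour is incompatible with a finite sum of exponentials-times-polynomials.

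For $G_{1/2}$, the coefficients are $a_k = 2^{-2k}\binom{2k}{k}$. By Stirling's approximation, $a_k \sim \frac{1}{\sqrt{\pi k}}$. A sequence satisfying a constant-coefficient linear recurrence and tending to $0$ must be a finite sum $\sum_i p_i(k)\lambda_i^k$ with $|\lambda_i| \le 1$ for all $i$; after discarding the part with $|\lambda_i|<1$ (which decays exponentially), the dominant behaviour would be $\sum_{|\lambda_i|=1} p_i(k)\lambda_i^k$. For this to decay to $0$ like $k^{-1/2}$ is impossible: if any $p_i$ is nonzero then the terms with $|\lambda_i|=1$ are of order at least a constant along a subsequence (standard: a nontrivial trigonometric-polynomial-times-power-of-$k$ expression with $|\lambda_i|=1$ cannot tend to $0$ unless it is identically $0$, since its Cesàro-type averages pick out nonzero coefficients), contradicting $a_k\to 0$; and if all such $p_i$ vanish then $a_k$ decays exponentially, contradicting $a_k\asymp k^{-1/2}$. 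Hence no recurrence exists and $G_{1/2}$ is irrational.

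For $G_C$, the Catalan numbers satisfy $C_k = \frac{1}{k+1}\binom{2k}{k} \sim \frac{4^k}{\sqrt{\pi}\,k^{3/2}}$, and in particular $C_{k+1}/C_k = \frac{2(2k+1)}{k+2} \to 4$ while $C_{k+1}/C_k$ is never exactly equal to a ratio coming from a single dominant exponential in a clean way — more robustly, a constant-coefficient linear recurrence forces $C_k \sim c\, k^{m}\, 4^k$ for some integer $m \ge 0$ and constant $c>0$ (the unique root of maximal modulus must be $\lambda = 4$, real and positive, since $C_k>0$; its multiplicity $m+1$ determines the polynomial factor), whereas the true asymptotic has the half-integer exponent $k^{-3/2}$. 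Since $k^{-3/2}$ is not of the form $k^{m}$ for any integer $m$, we reach a contradiction. (Alternatively one invokes the known fact that an algebraic power series whose minimal polynomial genuinely involves $\sqrt{1-4x}$ is not rational, because rational functions are exactly the algebraic series of degree one; but the asymptotic argument is self-contained.) Thus $G_C$ is irrational as well.

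The main obstacle is making rigorous the step ``a sequence satisfying a constant-coefficient linear recurrence that matches a given polynomial asymptotic $c\,k^{\alpha}\lambda^k$ must have $\alpha$ equal to a nonnegative integer.'' This requires the structure theorem for solutions of linear recurrences (roots and multiplicities of the characteristic polynomial) together with the observation that distinct $(\lambda_i, \text{degree})$ pairs produce asymptotically distinguishable contributions; in particular the dominant term is $p(k)\lambda^k$ with $\lambda$ a root of maximal modulus, and when the sequence is positive this $\lambda$ is real positive with a polynomial $p$ of nonnegative integer degree — which clashes with the half-integer powers $k^{-1/2}$ and $k^{-3/2}$ appearing in the Stirling asymptotics. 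Everything else is routine estimation.
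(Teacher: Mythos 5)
Your proof is correct in outline, and it takes a genuinely different route from the paper. The paper never proves the claim where it is stated; the claim falls out of the machinery developed later, namely Kronecker's theorem (\Cref{thm:Kron}, rationality of $G[a]$ is equivalent to finite rank of the Hankel matrix $H[a]$) combined with the explicit Hankel determinant evaluations $\det(H_C^{(d,d)})=1$ and $\det(H[1/\sqrt{1-4x}]^{(d,d)})=2^d$ from \cite{Aigner1999} (\Cref{lemma:catalan-full-rank,lemma:square-root}), which show both Hankel matrices have infinite rank. You instead use the equivalence of rationality with an eventual constant-coefficient linear recurrence on the coefficients, and rule out such a recurrence by Stirling asymptotics: an exponential polynomial $\sum_i p_i(k)\lambda_i^k$ cannot decay like $k^{-1/2}$, nor can it match $4^k k^{-3/2}$, because after stripping the dominant modulus the unimodular part either forces non-decay or leaves only exponential decay, while half-integer powers of $k$ are incompatible with the integer degrees of the $p_i$. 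The facts you defer (the structure theorem for linear recurrences, and that a nontrivial exponential polynomial with unimodular frequencies does not tend to zero, provable via Ces\`aro averages and orthogonality) are standard and your sketch of them is sound; for $G_C$ you could streamline by applying the $G_{1/2}$ argument verbatim to $C_k/4^k$, avoiding the slightly delicate appeal to positivity to locate the dominant root. The trade-off: your argument is elementary and self-contained, and applies to any coefficient sequence with asymptotics $c\,\lambda^k k^{\alpha}$ for non-integer $\alpha$; the paper's Hankel route is what it needs anyway, since full rank of $H^{(d,d)}$ for every $d$ is exactly the quantitative input behind the finite-horizon lower bounds (\Cref{thm:g_one_half_is_hard}), which mere irrationality cannot deliver.
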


\subsection{Results}

Our main results are that the degree of rationality
of $G$ determines the space complexity
of the corresponding streaming problem.

\begin{theorem}
\label{thm:ration}
    If $G[a]$ is a rational function of degree $d$, then for $t \geq d$,
    $$\cost_a(t) = d$$
and for $t <d$,
        $$\cost_a(t;t+d-1) = t+1.$$
\end{theorem}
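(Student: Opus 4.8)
The plan is to prove both directions (upper and lower bound on space) for each regime separately. For the upper bound in both regimes, the natural approach is to exhibit a linear streaming algorithm. Write $Q(x)G(x)=P(x)$ with $Q(x)=\sum_{k=0}^d q_k x^k$, $q_0\neq 0$, and $\deg P\le d-1$; comparing coefficients gives the linear recurrence $\sum_{k=0}^{d} q_k a_{n-k}=0$ for $n\ge d$, and hence $(Tz)_n$ satisfies the same $d$-term recurrence (plus a transient). The algorithm maintains as its buffer the last $d$ relevant ``running'' quantities needed to continue the recursion — concretely, a window of $d$ consecutive partial outputs (or the state of the companion-matrix linear system driven by the $z_n$'s). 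When $t\ge d$ this requires exactly $d$ real numbers; when $t<d$ we have not yet accumulated a full window, so the trivial algorithm storing $z_0,\dots,z_t$ uses $t+1$ numbers and no fewer is needed since, before the recurrence ``kicks in,'' the first $t+1$ outputs can be algebraically independent linear functionals of $(z_0,\dots,z_t)$.

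For the lower bound, fix a time $t$ and suppose a (continuous-update) algorithm is correct up to time $\tau$ with buffer sizes bounded by $b$. Consider inputs $z$ supported on coordinates $0,\dots,t$, i.e. $z\in\R^{t+1}$, and consider the map $z\mapsto \state{t}\in\R^{b}$ obtained by running the updates $\smap{0},\dots,\smap{t}$. Since each $\smap{i}$ is continuous, this composite map $\Phi:\R^{t+1}\to\R^{b}$ is continuous. Correctness up to time $\tau$ means that for each $t\le s\le\tau$ the value $(Tz)_s$, a linear functional $\ell_s$ of $z$, factors through $\Phi$: there is a (not necessarily continuous) $g_s$ with $\ell_s=g_s\circ\Phi$. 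The key point is that a continuous map cannot ``lose less information than its target dimension allows'': if $b<\dim\big(\operatorname{span}\{\ell_t,\dots,\ell_\tau\}\big)$, then $\Phi$ restricted to a generic affine subspace would have to be injective on a space of dimension exceeding $b$ while landing in $\R^b$ — impossible for continuous maps by invariance of domain (a continuous injection $\R^{k}\hookrightarrow\R^{b}$ forces $k\le b$). So it suffices to show that the relevant linear functionals have large span: in the regime $t\ge d$ one wants $d$ linearly independent ones among $\ell_t,\ell_{t+1},\dots$, and in the regime $t<d$, with $\tau=t+d-1$, one wants $t+1$ linearly independent ones among $\ell_t,\dots,\ell_{t+d-1}$.

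Extracting these functionals is the linear-algebra heart of the argument and, I expect, the main obstacle. Restricting to inputs in $\R^{t+1}$, the functional $\ell_s$ is the row of $T$ indexed by $s$, truncated to columns $0,\dots,t$: that is, $\ell_s=(a_{s},a_{s-1},\dots,a_{s-t})$ (with the convention $a_j=0$ for $j<0$). For $t\ge d$ one must show that $d$ consecutive such truncated rows span a $d$-dimensional space — equivalently that the $d\times(t+1)$ matrix they form has full row rank. Here is where rationality of degree \emph{exactly} $d$ (not less) must be used: if these rows were dependent for all choices of starting index, the sequence $(a_n)$ would satisfy a recurrence of order $<d$, contradicting minimality of the degree; conversely $Q$ invertible ($q_0\neq0$) guarantees the recurrence genuinely propagates forward so that the rank does not collapse. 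For $t<d$ with the longer time budget $\tau=t+d-1$, one argues symmetrically that the $(t+1)$ truncated rows $\ell_t,\dots,\ell_{t+d-1}$ are independent: one can, for instance, pick the submatrix on a suitable set of $t+1$ columns and recognize it as triangular-like (using that $a_0=q_0^{-1}p_0\neq 0$, or more carefully that the leading behavior of $(a_n)$ near $n=0$ is nondegenerate), so its determinant is nonzero. I would formalize the "continuous maps can't compress" step as a standalone lemma via invariance of domain applied after intersecting $\Phi^{-1}$ of a point with a generic coordinate subspace, then feed it the rank computations above.
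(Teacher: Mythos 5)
Your overall architecture matches the paper's: the upper bound is the known recurrence/state-space realization of a degree-$d$ rational $G$ (plus the trivial algorithm for $t<d$), and the lower bound reduces, via a ``continuous maps cannot compress'' principle, to showing that the matrix of truncated rows $\ell_t,\dots,\ell_\tau$ --- which is exactly a Hankel submatrix $H^{(I,t)}$ of $H[a]$ up to reversing columns --- has large rank. Your topological step is sound and is a legitimate variant of the paper's \Cref{lemma:small_buffer_small_rank}: you restrict $\Phi$ to a subspace $V$ complementary to $\bigcap_s\ker\ell_s$, observe that $\Phi|_V$ is injective because the $\ell_s$ factor through $\Phi$, and invoke invariance of domain; the paper instead restricts to a sphere in $V$ and invokes Borsuk--Ulam. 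These are interchangeable. One small caveat on the upper bound: a window of $d$ consecutive \emph{outputs} is not by itself a valid state, since the output recurrence $\sum_{k=0}^d q_k (Tz)_{n-k}$ equals a combination of the last $d$ inputs $z_{n-d+1},\dots,z_n$ rather than zero; your parenthetical companion-matrix realization is the correct $d$-dimensional state, and is what the cited algorithm of \cite{dj2024efficient} uses.

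The genuine gap is in the linear algebra for the regime $t<d$. You need the $d\times(t+1)$ matrix with rows $\ell_t,\dots,\ell_{t+d-1}$ to have full column rank $t+1$, equivalently that $H^{(d-1,t)}$ has rank $t+1$, which follows from the nonsingularity of $H^{(d-1,d-1)}$. Your proposed justification --- a triangular-like submatrix ``using that $a_0\neq 0$'' --- fails because $a_0$ can vanish for a rational function of degree exactly $d$ (e.g.\ $G(x)=x/(1-x^2)$ has degree $2$ and $a=(0,1,0,1,\dots)$; here $H^{(1,1)}$ is nonsingular but anti-triangular, not triangular), and ``the leading behavior of $(a_n)$ is nondegenerate'' is not a proof. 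The nonsingularity of $H^{(d-1,d-1)}$ is precisely the content of Kronecker's theorem (\Cref{thm:Kron} in the paper), and the proof genuinely uses the Hankel structure: a dependence among the first $d$ columns of $H$ propagates to all later columns (\Cref{clm:HankelMat}), so the first $d$ columns and, by symmetry, the first $d$ rows span everything, forcing the leading $d\times d$ block to carry the full rank $d$. Your sketch for $t\ge d$ (a dependence among consecutive truncated rows would yield a recurrence of order $<d$) is morally this same propagation argument and can be made rigorous, but for $t<d$ you must either prove or cite this Kronecker-type statement; as written, that step is missing.
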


\begin{theorem}
\label{thm:irration}
    If $G[a]$ is irrational, then for all $t \in \N$, 
    $$\cost_a(t) = t+1.$$
\end{theorem}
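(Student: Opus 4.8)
The plan is to prove the lower bound $\cost_a(t)\ge t+1$; the matching upper bound is immediate, since the algorithm that keeps $(z_0,\dots,z_{t'})$ in its buffer at time $t'$ and outputs $\sum_{j=0}^{t'}a_{t'-j}z_j$ has $\beta(t')=t'+1$. So fix any streaming algorithm that solves the problem for $a$ on all infinite streams, with buffer-size function $\beta$. Plugging the fixed initial state into the composition of the continuous update maps $\smap{0},\dots,\smap{t}$, we obtain a continuous map $F_t\colon\R^{t+1}\to\R^{\beta(t)}$ sending $(z_0,\dots,z_t)$ to the state $\state{t}$. Everything then rests on two facts: $F_t$ is injective, and a continuous injection $\R^{n}\to\R^{m}$ forces $m\ge n$.

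For injectivity, suppose $F_t(z)=F_t(z')$ with $z,z'\in\R^{t+1}$, and consider the two infinite streams obtained by extending $z$ and $z'$ by zeros. Since $\state{t}$ and all later inputs agree on the two runs, all later states and all later outputs agree; by correctness this forces the time-$(t+k)$ outputs on the two streams to coincide for every $k\ge 0$, which — expanding $(Tz)_{t+k}=\sum_{j=0}^{t+k}a_{t+k-j}z_j$ and using that coordinates beyond index $t$ vanish — reads $\sum_{j=0}^{t}a_{t+k-j}(z_j-z'_j)=0$ for all $k\ge 0$. Put $w=z-z'$ and $W(x)=\sum_{j=0}^{t}w_jx^j$. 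The displayed identities say exactly that the power series $W(x)G(x)$ has no monomial of degree $\ge t$, i.e.\ $W(x)G(x)=P(x)$ for a polynomial $P$ with $\deg P\le t-1$. If $w\ne 0$, factor $W(x)=x^r\tilde W(x)$ with $\tilde W(0)\ne 0$; comparing orders of vanishing at $0$ shows $x^r\mid P$, so $\tilde W(x)G(x)=\tilde P(x)$ with $\tilde P$ a polynomial and $\tilde W$ invertible, which makes $G$ rational — contradicting the hypothesis. Hence $w=0$, so $F_t$ is injective.

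To conclude, I would invoke invariance of domain: if $\beta(t)<t+1$, then composing $F_t$ with the coordinate inclusion $\R^{\beta(t)}\hookrightarrow\R^{t+1}$ gives a continuous injection $\R^{t+1}\to\R^{t+1}$ whose image lies in a proper subspace and hence is not open, a contradiction; therefore $\beta(t)\ge t+1$ and $\cost_a(t)\ge\beta(t)\ge t+1$. The step that needs the most care, conceptually, is recognizing why the infinite continuation by zeros is essential: it is precisely the \emph{infinite} family of vanishing linear constraints on $w$ that lets us conclude $W\equiv 0$ (any finite subfamily would not), so correctness on the entire infinite stream — not merely up to a finite horizon — is genuinely used here, consistent with $\cost_a(t)=\cost_a(t;\infty)$. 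It is also worth noting that this argument uses no continuity of the output maps $\outmap{t}$: we only need that the future outputs are determined by $\state{t}$ together with the future inputs.
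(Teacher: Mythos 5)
Your proposal is correct, and it reaches the lower bound by a genuinely different route than the paper. The paper's proof runs through its general lower-bound mechanism (\Cref{lemma:small_buffer_small_rank}): it cites Kronecker's theorem to get that $H[a]$ has infinite rank, uses \Cref{clm:HankelMat} to get that the first $t+1$ columns of $H$ are independent, invokes a compactness lemma to extract a \emph{finite} submatrix $H^{(j,t)}$ of rank $t+1$, and then applies Borsuk--Ulam to find an antipodal pair $v,-v$ on a sphere in a suitable subspace with $f(v)=f(-v)$ but $Mv\neq 0$. You instead prove outright that the state map $F_t:\R^{t+1}\to\R^{\beta(t)}$ is injective and conclude via invariance of domain. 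Your injectivity step is a self-contained, elementary substitute for the Kronecker citation: the observation that $W(x)G(x)$ being a polynomial for some nonzero $W$ (after stripping the factor $x^r$ to make $W$ invertible in the paper's sense) forces $G$ rational is exactly the linear-algebraic content needed, and working with the full infinite family of constraints lets you bypass the compactness lemma entirely. The trade-offs: the paper's Borsuk--Ulam route only needs a single antipodal coincidence rather than full injectivity, and its finite-horizon lemma is reusable for \Cref{thm:ration,thm:g_one_half_is_hard}, whereas your argument is tailored to the infinite-horizon statement (it genuinely uses correctness at all times $t+k$, as you note); on the other hand, your proof is more direct and makes transparent exactly where irrationality enters. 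Both invariance of domain and Borsuk--Ulam are nontrivial topological inputs of comparable depth, so neither version is strictly more elementary. All steps check out, including the small point that $\tilde W$ is invertible and that the degrees of $\tilde W$ and $\tilde P$ fit the paper's definition of rationality.
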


\Cref{thm:irration} holds just for infinite streams. 
Irrationality is insufficient to derive lower bounds for finite streams because
for every $\tau$,
there exists\footnote{Consider for example $G(x) = G_C(x) + \sum_{k=0}^\tau (1-C_k) x^k$.} an irrational $G[a]$ 
so that $\cost_a(t; \tau) = 1$.
The theorem does not provide any information
on space complexity for a finite time horizon ($\tau < \infty$).

The analysis of finite time horizons  is done by studying the corresponding Hankel matrix (see \Cref{secL:LB}).
Hankel matrices were studied in combinatorics and in 
the analytic theory of continued fractions~\cite{wall2018analytic}.
The properties of Hankel matrices allow to prove that $G_{1/2}$ is essentially as hard as it gets even for finite times.

\begin{theorem}\label{thm:g_one_half_is_hard}
For all $t \in \N$,
$$\cost_{G_{1/2}}(t;2t) = t+1.$$
\end{theorem}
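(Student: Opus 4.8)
The plan is to reduce the finite-time space lower bound to a rank computation on an associated Hankel matrix. Here is the structure I would follow. First, by Lemma 1.1 (algebraic algorithms are linear) — or rather the underlying reason behind it — it suffices to understand the linear-algebraic obstruction: a streaming algorithm that is correct up to time $\tau$ with buffer size $\beta$ gives, at each time $t \le \tau$, a way to compress the relevant information about the past stream into $\beta(t)$ real numbers via a continuous map, and then recover all the outputs $(Tz)_{t}, (Tz)_{t+1}, \dots$ that still depend on the prefix $z_0,\dots,z_t$. The outputs at times $t, t+1, \dots$ that depend on $z_0,\dots,z_t$ are governed by the ``tail'' coefficients of $a$, and the map from the prefix to these outputs is the linear map whose matrix is exactly a submatrix of the Hankel matrix $H = (a_{i+j})_{i,j \ge 0}$. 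So if we want correctness up to time $2t$, we need the algorithm's state at time $t$ to retain enough information to reconstruct a vector of dimension equal to the rank of the top-left $(t+1)\times(t+1)$ block of $H$ (restricted appropriately so that only outputs with index $\le 2t$ are demanded). A topological/dimension argument — continuous maps cannot increase topological dimension, so one cannot continuously compress $\R^{n}$ into $\R^{b}$ and recover a fixed injective-on-a-full-dimensional-set linear image unless $b \ge n$ — then forces $\beta(t) \ge \rank(\text{that block})$.

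The second, and crucial, step is the rank computation: I must show that the relevant finite Hankel block of $G_{1/2}$ has full rank $t+1$. The coefficients are $a_k = 2^{-2k}\binom{2k}{k}$, so the Hankel matrix is $H_{i,j} = 2^{-2(i+j)}\binom{2(i+j)}{i+j}$. I would either (i) invoke the classical continued-fraction / orthogonal-polynomial theory of Hankel determinants (as hinted in the excerpt's reference to \cite{wall2018analytic}): the sequence $a_k$ has a known Jacobi continued fraction expansion, and the leading Hankel determinant $\det(a_{i+j})_{0 \le i,j \le n}$ is a nonzero product of the continued-fraction parameters; or (ii) compute the Hankel determinant directly — for $a_k = 4^{-k}\binom{2k}{k}$ the Hankel determinant has a clean closed form (it is a ratio of products of factorials, manifestly nonzero). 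Either way the key point is that the Hankel block is nonsingular, which exactly reflects the fact, asserted in the Claim in the excerpt, that $G_{1/2}$ is irrational: were some finite Hankel determinant to vanish in a persistent way, $G$ would satisfy a linear recurrence and be rational. I would need to be slightly careful that the truncation to ``outputs of index $\le 2t$'' still sees a full-rank $(t+1)\times(t+1)$ Hankel block — but demanding outputs at times $t, t+1, \dots, 2t$ from a prefix of length $t+1$ gives precisely a square $(t+1)\times(t+1)$ Hankel matrix with entries $a_{i+j}$ for $0 \le i,j \le t$, which is exactly the block whose nonsingularity I establish.

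Finally, the upper bound $\cost_{G_{1/2}}(t;2t) \le t+1$ is the trivial algorithm that just stores the whole prefix $z_0,\dots,z_t$ (buffer size $t+1$), so combined with the lower bound we get equality. The main obstacle I anticipate is the rank/dimension step: making the ``continuous maps cannot compress'' argument fully rigorous in the streaming setting — one has to track that the composition of the update maps $u^{(t)}$ from time $t$ to time $2t$, followed by the (not necessarily continuous) output maps, must reproduce a linear map of rank $t+1$ on $\R^{t+1}$, and argue via invariance of domain (or a Baire-category / Sard-type argument, since the output maps are not continuous) that this is impossible when $\beta(t) \le t$. I expect this is handled by the general machinery the authors develop in \Cref{secL:LB} for finite time horizons, so in the write-up I would state the Hankel-rank criterion as the main lemma and then just verify that $G_{1/2}$'s Hankel blocks are nonsingular.
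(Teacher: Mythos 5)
Your proposal matches the paper's proof: the lower bound is exactly the Hankel-rank criterion of \Cref{lemma:small_buffer_small_rank} applied to the nonsingular block $(H_{1/2})^{(t,t)}$ (the paper obtains nonsingularity from Aigner's evaluation $\det(H[1/\sqrt{1-4x}]^{(d,d)})=2^d$ plus a diagonal rescaling, i.e.\ your option (ii)), combined with the trivial store-the-prefix upper bound. One remark on the step you flag as the main obstacle: the paper's compression argument uses Borsuk--Ulam on a sphere inside a full-rank subspace of inputs, which is what circumvents the fact that the output maps are not assumed continuous — invariance of domain applied to the full composition would not directly apply there.
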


The theorems reveal ``Diophantine-approximation-like'' properties that are related to the space complexity of streaming algorithms.
Diophantine approximation studies ``how well can a real number $\alpha$ be approximated by rationals?'' 
If $\alpha$ is rational, then it has a perfect representation. If $\alpha$ is irrational, it can only be approximated. 
There are standard ways to measure the cost of such approximations.
The ``hardest to approximate'' real numbers turn out to be irrational algebraic numbers (e.g.\ the Thue--Siegel--Roth theorem). For example, the golden ratio $\frac{1+\sqrt{5}}{2}$ has ``high complexity'' in this model (notice the similarity to $G_C)$.
In our context, space complexity  corresponds to ``distance from rationality''.
Rational functions have constant complexity, and irrational ones have infinite complexity.
In addition, like the golden ratio,
the algebraic irrationals
$G_{1/2}$ and $G_C$ are ``as far as possible
from the rationals''. 

\subsection{Applications}
\label{sec:app}
The above results also carry algorithmic implications.
Continual counting is about computing prefix-sums of the stream. In our setting,
it corresponds to $T_1 = T[(1,1,1,\ldots)]$.
\emph{Differentially private (DP) continual counting} \cite{dwork_differential_2010,chan_private_2011} solves the same problem, but also requires the streaming algorithm $\mathcal{A}$ to 
satisfy differential privacy \cite{dp_2006,dwork_algorithmic_2013}.

Throughout this section, we only consider the \emph{approximate} DP version of the problem, which sees the most practical use.
We also only consider the case where an upper bound on the length of the stream is known.\footnote{A technique in \cite{chan_private_2011} allows for removing the assumption of a bounded-length stream at the cost of worse guarantees, but complicates the exposition.}

Most DP algorithms for continual counting are based on adding \emph{correlated noise} to the exact output
and have the form \cite{denissov_improved_2022}:
$$\mathcal{A}(z) = T_1 z + L y ,$$
where $L$ is a lower-triangular matrix and $y$ 
is a random Gaussian vector sampled from $\mathcal{N}(0, \sigma^2 I)$.
Equivalent forms that correspond to other factorizations of $T_1$ (such as $T_1(z + T_1^{-1}L y)$) can be analyzed using \Cref{lem:robust-decomposition}.

The noise parameter $\sigma$ is determined by the privacy demands and by the matrix $L$.
Observe that $\mathcal{A}(z)$ is essentially running two instances of our streaming problem: upon receiving $z_t$, it generates a sample $y_t$ and outputs $(T_1z)_t + (Ly)_t$.
Technically $y_t$ can be sampled at any time $t' < t$, but we show in \Cref{sec:future} that our lower bounds are robust to this detail; there is no gain from sampling $y_t$ earlier.
The space complexity of $T_1$ is $1$, and so $L$ determines the space complexity of $\mathcal{A}(z)$.

The streaming model we consider requires exact computation of $(Mz)_t$.
This matches the usual DP formulation, in which $\sigma$ is calibrated for the intended Gaussian noise distribution $\mathcal{N}(0, \Sigma)$ where $\Sigma = \sigma^{2} LL^{T}$ to mask $T_1(z-z')$ for neighboring inputs $z,z'$.
If $Ly$ is only approximated, effectively released as $F(y)$, then without a stability argument, this may increase the error and/or invalidate the privacy calibration;
otherwise, the guarantees should be stated for $F$ rather than $L$.

Space complexity matters for this problem.
Part of the recent interest in DP continual counting is motivated by private machine learning \cite{kairouz_practical_2021,denissov_improved_2022,mcmahan_federated,choquette-choo_multi-epoch_2022,choquettechoo2023amplified}.
Unconstrained stochastic gradient descent (SGD) updates with constant learning rate can be viewed as computing prefix sums on gradient vectors.
Hence, training a model with DP (protecting the gradients) can be reduced to one instance of DP continual counting per model dimension.
Because the model dimension often dominates the number of weight updates~\cite{dj2024efficient}, a linear space complexity is prohibitive in practice.

The choice of the matrix $L$ has a crucial impact on the DP properties of~$\mathcal{A}$.
Choosing $L=\sqrt{T_1}=T_{1/2}$ leads to near-optimal guarantees~\cite{fichtenberger2023constant, henzinger_almost_2023}.\footnote{Recently, explicit non-Toeplitz $L$ improving lower-order terms have been proposed~\cite{HU25,henzinger2025normalizedsquarerootsharper}. No space-efficient implementations of these matrices have been proposed.}
However, attempts to implement $T_{1/2}$ in sublinear space were unsuccessful.
This drove further research into other matrices $L$ and space-accuracy tradeoffs~\cite{andersson2024smooth,dj2024efficient,binning2025,henzinger2025binnedgroupalgebrafactorization}.
The best known bounds for this trade-off~\cite{dj2024efficient} are achieved for $L=T[\tilde{G}_{1/2}] \approx T_{1/2}$, where $\tilde{G}_{1/2}$ is a rational-function approximation of $G_{1/2}$ implemented with memory cost equal to the degree of $\tilde{G}_{1/2}$.
\Cref{thm:ration} implies that this implementation of $T[\tilde{G}_{1/2}]$ is tight, and \Cref{thm:g_one_half_is_hard} motivates why matrices other than $T_{1/2}$ had to be considered.
In a nutshell, there are no sublinear space algorithms for $T_{1/2}$.
Any implementation of $\mathcal{A}$ with $L=T_{1/2}$ for $n$ steps, restricted to continuous operations, requires space $\Omega(n)$.

Additional motivation from private machine learning led to studying the problem of DP continual counting with momentum and weight decay~\cite{KalLamp24}.
Translating this problem to our setting,
there is a momentum parameter $0\leq \mu < 1$ and a weight decay parameter $0 < \lambda \leq 1$. 
To avoid degenerate cases, we can assume that $0\leq \mu < \lambda \leq 1$.
The streaming problem now corresponds to the generating function
$$G_{\lambda,\mu} = \frac{1}{(1-\lambda x)(1-\mu x)}.$$
In~\cite{KalLamp24}, a DP streaming algorithm of the form 
$$\mathcal{A}_{\lambda, \mu}(z) = T_{\lambda, \mu} z + Ly$$ was presented, achieving asymptotically optimal error for $L= T[\sqrt{G_{\lambda, \mu}}]$.
Our methods yield optimal lower bounds for the space complexity of this task as well. 
\begin{theorem}\label{thm:g_one_half_weighted_is_hard}
For $0\leq \mu < \lambda \leq 1$ and for all $t\in\N$,
$$\cost_{\sqrt{G_{\lambda, \mu}}}(t;2t) \geq t-4.$$
\end{theorem}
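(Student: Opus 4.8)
The plan is to follow the strategy behind \Cref{thm:g_one_half_is_hard}: reduce the space lower bound to a rank lower bound on the Hankel matrices built from the coefficient sequence $a=(a_k)_{k\ge 0}$ of $\sqrt{G_{\lambda,\mu}}$, and then show these matrices have essentially full rank. Here $\sqrt{G_{\lambda,\mu}}(x)=\big((1-\lambda x)(1-\mu x)\big)^{-1/2}$, so $\sum_k a_k x^k=\big(1-(\lambda+\mu)x+\lambda\mu x^2\big)^{-1/2}$. As in \Cref{secL:LB}, $\cost_{\sqrt{G_{\lambda,\mu}}}(t;2t)$ is bounded below, up to a small additive constant, by the ranks of (sub)matrices of the Hankel matrices $H^{(t)}_\ell=(a_{i+j+\ell})_{0\le i,j\le t}$ for small shifts $\ell$; so it suffices to show each such matrix has rank $t+1-O(1)$.

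The key point is that $a$ is a moment sequence. Let $\nu_{\lambda,\mu}$ be the probability measure on $[\mu,\lambda]$ with density $\tfrac{1}{\pi\sqrt{(s-\mu)(\lambda-s)}}$ (an affine image of the arcsine/Chebyshev weight). The substitution $s=\tfrac{\mu+\lambda}{2}+\tfrac{\lambda-\mu}{2}\cos\theta$ turns $\int_\mu^\lambda\frac{d\nu_{\lambda,\mu}(s)}{1-sx}$ into $\tfrac1\pi\int_0^\pi\frac{d\theta}{A-B\cos\theta}$ with $A=1-\tfrac{\mu+\lambda}{2}x$, $B=\tfrac{\lambda-\mu}{2}x$, and the standard identity $\tfrac1\pi\int_0^\pi\frac{d\theta}{A-B\cos\theta}=(A^2-B^2)^{-1/2}$ gives
$$\int_\mu^\lambda \frac{d\nu_{\lambda,\mu}(s)}{1-sx}=\big(A^2-B^2\big)^{-1/2}=\big(1-(\lambda+\mu)x+\lambda\mu x^2\big)^{-1/2}=\sqrt{G_{\lambda,\mu}}(x).$$
Expanding $\frac{1}{1-sx}=\sum_k s^k x^k$ and comparing coefficients yields $a_k=\int_\mu^\lambda s^k\,d\nu_{\lambda,\mu}(s)$ for all $k$. (The case $\mu=0$ is identical with $\nu_{\lambda,0}$ supported on $[0,\lambda]$, and $\lambda=1,\mu=0$ recovers the arcsine law underlying $G_{1/2}$.)

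Since $\nu_{\lambda,\mu}$ is supported on an interval, hence on infinitely many points, its moment matrices are positive definite: for nonzero $c\in\R^{t+1}$,
$$c^{\mathsf{T}}H^{(t)}_0c=\int_\mu^\lambda\Big(\sum_{i=0}^t c_i s^i\Big)^2 d\nu_{\lambda,\mu}(s)>0,$$
because the nonzero degree-$\le t$ polynomial $\sum_i c_i s^i$ vanishes at only finitely many points. Thus $H^{(t)}_0$ has full rank $t+1$ (in particular the infinite Hankel matrix has infinite rank, consistent with $\sqrt{G_{\lambda,\mu}}$ being irrational). The shifted matrices are the same: $a_{k+\ell}=\int s^\ell\cdot s^k\,d\nu_{\lambda,\mu}(s)$ is the moment sequence of the positive measure $s^\ell\,d\nu_{\lambda,\mu}(s)$, still supported on infinitely many points (using $\mu\ge 0$ so $s^\ell\ge 0$ on the support), so $H^{(t)}_\ell$ also has full rank. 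Feeding this into the Hankel-to-space lower bound of \Cref{secL:LB} gives $\cost_{\sqrt{G_{\lambda,\mu}}}(t;2t)\ge t+1-O(1)$; tracking the $O(1)$ loss (at most $4$) yields the stated bound (the truth is presumably $t+1$, as for $G_{1/2}$, but we do not optimize the constant).

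The main obstacle is the first step: having the ``Hankel rank $\Rightarrow$ space'' lower bound available with an explicit, small additive loss, since the rank computation itself is made routine by the moment representation. It is worth noting that this moment representation is exactly what makes $\sqrt{G_{\lambda,\mu}}$ behave like $G_{1/2}$; a direct reduction to \Cref{thm:g_one_half_is_hard} seems unavailable, because dividing $(1-\mu x)$ out of $\sqrt{G_{\lambda,\mu}}$ produces the factor $\sqrt{1-\mu x}$, which is irrational and cannot be removed by a low-complexity (rational) change of variables, so the Hankel argument has to be rerun rather than invoked as a black box.
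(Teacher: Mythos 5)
Your proof is correct, and it takes a genuinely different (and in fact sharper) route than the paper. You identify the coefficients of $\sqrt{G_{\lambda,\mu}}=((1-\lambda x)(1-\mu x))^{-1/2}$ as the moments of the arcsine law pushed forward to $[\mu,\lambda]$; your computation of the Stieltjes-type transform is right, and since the measure has infinite support, the moment matrix $(H[\sqrt{G_{\lambda,\mu}}])^{(t,t)}$ is a Gram matrix of linearly independent monomials in $L^2(\nu_{\lambda,\mu})$, hence positive definite and of full rank $t+1$. Plugging this into \Cref{lemma:small_buffer_small_rank} with $I=t$ gives $\cost_{\sqrt{G_{\lambda,\mu}}}(t;2t)\geq t+1$, which is stronger than the stated $t-4$; note there is no additive loss and no need for shifted Hankel matrices in that lemma, so your hedging about an $O(1)$ loss at that step is unnecessary. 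The paper instead proves full rank for an auxiliary Catalan-like generating function via Junod's determinant formula (\Cref{thm:determinant_junod}, \Cref{lemma:ab_gf_full_rank}) and then transfers the rank bound to $\sqrt{G_{\lambda,\mu}}$ in two steps of the comparison lemma (\Cref{lem:comp}), losing five in rank along the way --- which is exactly where the $t-4$ comes from. Your moment-positivity argument avoids the transfer entirely and yields the optimal constant; the paper's algebraic route has the advantage of applying to complex parameters $b\neq c$ and to Hankel matrices that are not positive semidefinite, where no underlying measure is available, but for the regime $0\leq\mu<\lambda\leq 1$ your argument is both simpler and tighter.
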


There is additional literature on DP continual counting with arbitrary weights \cite{BolotFMNT13,henzinger2024unifying,HU25}.
Most algorithms in this domain are also based on adding correlated noise $Ly$, and whenever $L$ is Toeplitz our techniques and results can be leveraged.

\begin{remark}
In a couple of places below we reuse notation.
For example, $\mu$ denotes the momentum parameter above
and in one of the proofs below it denotes a monomial. The meaning is clear from the context. 
\end{remark}

\section{A lower bound mechanism}
\label{secL:LB}

In this section, we develop a mechanism for proving lower bounds on the space complexity of a sequence $a$ or a Toeplitz matrix $T = T[a]$.
For the lower bound, we shall also employ the Hankel matrix $H = H[a]$ defined by
$$H_{i,j} = a_{i+j}.$$
For $I,J \geq 0$, we denote by
$H^{(I,J)}$ the $I+1 \times J+1$ matrix
that corresponds to positions $0 \leq i \leq I$
and $0 \leq j \leq J$ in $H$.

\begin{lemma}[small buffer, small rank]\label{lemma:small_buffer_small_rank}
For all $t,I \in \N$,
    $$\cost_a(t;t+I) \geq \rank(H^{(I,t)})$$
    where $H = H[a]$.
\end{lemma}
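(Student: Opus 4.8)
The plan is to fix an arbitrary streaming algorithm that computes $a$ correctly up to time $t+I$ and show that its buffer size satisfies $\beta(t)\geq\rank(H^{(I,t)})$. Since $\cost_a(t;t+I)$ is the minimum over such algorithms of $\max_{t'\leq t}\beta(t')\geq\beta(t)$, this yields the claimed bound.

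First I would feed the algorithm an arbitrary prefix $z_0,\dots,z_t\in\R$ followed by the all-zero continuation $z_{t+1}=\dots=z_{t+I}=0$. Let $S:\R^{t+1}\to\R^{\beta(t)}$ send $(z_0,\dots,z_t)$ to the state $\state{t}$; being a composition of the continuous maps $\smap{0},\dots,\smap{t}$ (started from $\state{-1}=0$), it is continuous. Let $g:\R^{\beta(t)}\to\R^{I+1}$ send a state $s$ to the tuple of outputs the algorithm would produce at times $t,t+1,\dots,t+I$ if, starting from $\state{t}=s$, it received only zeros; this $g$ is assembled from $\smap{t+1},\dots,\smap{t+I}$ and $\outmap{t},\dots,\outmap{t+I}$, and --- crucially --- we do not require it to be continuous.

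Next comes a short computation. Since $z_j=0$ for $t<j\leq t+I$, for $0\leq k\leq I$ we have $(Tz)_{t+k}=\sum_{j=0}^{t+k}a_{t+k-j}z_j=\sum_{j=0}^{t}a_{t+k-j}z_j=\sum_{l=0}^{t}H_{k,l}\,z_{t-l}$, where we substituted $l=t-j$ and used $H_{k,l}=a_{k+l}$. Writing $R:\R^{t+1}\to\R^{t+1}$ for the coordinate reversal $(z_0,\dots,z_t)\mapsto(z_t,\dots,z_0)$, correctness of the algorithm says precisely that $g\circ S=H^{(I,t)}\circ R$; equivalently, with $\widetilde S:=S\circ R^{-1}$ (still continuous) we have $H^{(I,t)}=g\circ\widetilde S$. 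In words: the linear map $H^{(I,t)}$ factors through $\R^{\beta(t)}$ as a continuous map followed by an arbitrary map.

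The remaining step converts this factorization into a dimension bound, and it is the only delicate point: continuous maps need not preserve dimension (space-filling curves), so one cannot argue naively. Instead, set $r=\rank(H^{(I,t)})$ and choose a subspace $W\subseteq\R^{t+1}$ of dimension $r$ complementary to $\ker H^{(I,t)}$, so that $H^{(I,t)}|_W$ is injective. From $H^{(I,t)}|_W=g\circ(\widetilde S|_W)$ one sees that $\widetilde S|_W:W\to\R^{\beta(t)}$ is injective, since $\widetilde S(w_1)=\widetilde S(w_2)$ forces $H^{(I,t)}w_1=H^{(I,t)}w_2$ and hence $w_1=w_2$ for $w_1,w_2\in W$. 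Thus $\widetilde S|_W$ is a continuous injection of $W\cong\R^r$ into $\R^{\beta(t)}$, and by invariance of domain --- equivalently, the nonexistence of a continuous injection $\R^m\hookrightarrow\R^n$ for $m>n$ --- we conclude $\beta(t)\geq r=\rank(H^{(I,t)})$, which completes the plan.
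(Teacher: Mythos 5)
Your proposal is correct. The setup is the same as the paper's: restrict to streams that are zero after position $t$, observe that the state map into $\R^{\beta(t)}$ is continuous while the output map out of it need not be, and note that correctness forces the linear map given by $H^{(I,t)}$ (up to reversing/permuting columns) to factor through the buffer. Where you differ is the topological endgame. The paper restricts to a sphere $\mathbb{S}^{r-1}$ inside an $r$-dimensional subspace on which the linear map is injective and applies the Borsuk--Ulam theorem: if $\beta(t)<r$, some antipodal pair $v,-v$ collides in the buffer, yet correctness forces $Mv=-Mv$, i.e.\ $Mv=0$, a contradiction with injectivity. You instead observe that the factorization forces the state map to be injective on an $r$-dimensional subspace complementary to $\ker H^{(I,t)}$, and invoke invariance of domain (no continuous injection $\R^r\hookrightarrow\R^{\beta(t)}$ for $\beta(t)<r$). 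Both finishing moves are nontrivial theorems of algebraic topology of comparable depth, and both are applied correctly; the Borsuk--Ulam route exploits the oddness of the linear map to produce an explicit indistinguishable antipodal pair, while your route is marginally more streamlined in that it needs only injectivity and no sphere. Your reduction of $(Tz)_{t+k}$ to $\sum_{l} H_{k,l} z_{t-l}$ and the handling of the coordinate reversal are also correct.
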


The lemma relates space complexity, which is a ``complicated notion'', to the rank of a matrix, which is a notion that we understand pretty well. 

\begin{remark}
A variant of the lemma holds in a more general framework than the one we consider.
It holds for general lower-triangular matrices and not just for lower-triangular Toeplitz matrices.
For every matrix $A$, the buffer size can be bounded from below by the rank of some submatrix of $A$.
\end{remark}

The lower bound is based on a basic topological property;
the Borsuk--Ulam theorem. Denote by $\mathbb{S}^{r-1}$
the unit sphere in $\R^r$:
$$\mathbb{S}^{r-1} = \{x \in \R^r : \|x\|_2 = 1\}.$$

\begin{theorem*}[Borsuk--Ulam]
If $f : \mathbb{S}^{r-1} \to \R^{r-1}$ is continuous
then there is $x \in \mathbb{S}^{r-1}$
so that $f(x) = f(-x)$.
\end{theorem*}

\begin{proof}[Proof of \Cref{lemma:small_buffer_small_rank}]
Consider the restricted setting where the incoming stream $z$ is so that $z_{t'} = 0$ for all $t' > t$, and we are only interested in the outputs $(Tz)_{t'}$
for $t \leq t' \leq t+I$.
Accordingly, for the rest of the proof, assume that $z$
is zero after position $t$,
or equivalently that $z \in \R^{t+1}$.
Denote by $T'$ the first $t+1$ columns of $T$
so that $Tz = T'z$.

Denote by $f(z) \in \R^{\beta(t)}$
the state of the memory at time $t$
when the input is $z \in \R^{t+1}$.
The map $f$ is continuous by assumption on the algorithm.
Denote by $M$ the submatrix of $T'$ formed from row indices $t, t+1,\ldots,t+I$;
it defines a linear map from $\R^{t+1}$ to $\R^{I+1}$.
By definition of rank,
there is a vector space $V \subset \R^{t+1}$
of dimension $r = \rank(M)$
so that $Mv \neq 0$ for every non-zero $v \in V$.
The space $V$ is isomorphic to $\R^r$.
Let $S$ be a copy of the sphere $\mathbb{S}^{r-1}$ in $V$. For every $v \in S$ we have $Mv \neq 0$.

Denote by $g(z) \in \R^{I+1}$ the output of the algorithm
at times $t,t+1,\ldots,t+I$ on input $z$.
By the structure of the algorithm,
$g$ can be factored as $g = h \circ f$
where $h : \R^{\beta(t)} \to \R^{I+1}$
(we do not know that $h$ is continuous).
Assume towards a contradiction that
$\beta(t) < r$.
By the Borsuk--Ulam theorem, there is a $v \in S$ so that $f(v) = f(-v)$
and consequently $g(v) = g(-v)$.
By the correctness criterion, 
$$Mv = g(v) = g(-v) = -Mv$$
so $Mv = 0$ which is a contradiction. 

Because $H^{(I, t)}$ is a column permutation of $M$, the proof is complete. 
\end{proof}

\section{Asymptotic complexity}

In this section, we show that the space complexity of infinite streams is characterized by the rational function degree of the generating function. 

\subsection*{Compactness}

\begin{lemma}\label{lem:compactness}
Let $M$ be a $d \times \N$ real matrix with row rank $d$.
For each $j \in \N$, let $M[j]$ be the $d \times j+1$ matrix of the first $j+1$ columns of $M$.
Then, there is some $j \in \N$ so that
$M[j]$ has rank $d$.
\end{lemma}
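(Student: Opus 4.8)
The plan is to argue by contradiction, using compactness of the unit sphere $\mathbb{S}^{d-1}$. Suppose that $\rank(M[j]) < d$ for every $j \in \N$. Since $M[j]$ has exactly $d$ rows, a rank deficiency means its rows are linearly dependent, so there is a non-zero $c \in \R^d$ with $c^\top M[j] = 0$; after normalizing we may take $c$ on the unit sphere. Accordingly, define
\[
K_j = \{ c \in \mathbb{S}^{d-1} : c^\top M[j] = 0 \} .
\]
By the assumption, each $K_j$ is non-empty.

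Next I would record two structural properties of the sets $K_j$. First, each $K_j$ is closed: it is the intersection of the compact sphere $\mathbb{S}^{d-1}$ with the kernel of the linear map $c \mapsto c^\top M[j]$, which is a closed linear subspace of $\R^d$. Second, the sequence is nested, $K_0 \supseteq K_1 \supseteq K_2 \supseteq \cdots$, because $M[j]$ consists of the first $j+1$ columns of $M[j+1]$, so $c^\top M[j+1] = 0$ forces $c^\top M[j] = 0$ in particular.

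Now compactness of $\mathbb{S}^{d-1}$ finishes the argument: a nested sequence of non-empty closed subsets of a compact space has non-empty intersection (the finite intersection property). Pick $c \in \bigcap_{j \in \N} K_j$. Then $c \neq 0$ and $c^\top M[j] = 0$ for every $j$; since every column of $M$ appears in $M[j]$ for $j$ large enough, this yields $c^\top M = 0$, i.e.\ a non-trivial linear dependence among the rows of $M$, contradicting the hypothesis that $M$ has row rank $d$.

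The argument is short, and the only genuine point is the passage from ``every finite prefix of columns admits an annihilating row combination'' to ``a single row combination annihilates all columns simultaneously'', which is exactly where compactness of the sphere enters; everything else — that rank deficiency produces a non-zero left-kernel vector, and that the $K_j$ are closed and nested — is routine. (An alternative route avoids the sphere entirely: row rank equals column rank even for a $d \times \N$ matrix, so full row rank means the columns span $\R^d$, hence finitely many of them already span $\R^d$, and any $j$ past their indices works; but the compactness proof above is self-contained and matches the name of the lemma.)
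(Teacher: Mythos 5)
Your proof is correct and follows essentially the same route as the paper: argue by contradiction, extract for each $j$ a unit vector in the left kernel of $M[j]$, and use compactness of $\mathbb{S}^{d-1}$ to produce a single non-zero vector annihilating all of $M$. The only cosmetic difference is that you invoke the finite intersection property for the nested closed sets $K_j$, whereas the paper extracts a convergent subsequence and passes to the limit column by column; the two are interchangeable here.
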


\begin{proof}
Assume towards a contradiction that
for all $j$, the matrix $M[j]$ is singular.
That is, there is a unit vector $v[j] \in \R^d$ so that $v[j] M[j] = 0$;
in words, it is in the left kernel of $M[j]$.
The unit sphere is compact so by going to a subsequence we can assume that $v[j]$ tends to a limit $v_* \neq 0$.

Fix $j_0 \in \N$, and
denote by $m \in \R^d$ the $j_0$'th column in $M$.
For all $j > j_0$, we have
$\langle v[j], m \rangle =0$.
As $j \to \infty$, we have $\langle v_*, m \rangle
= \langle v_*-v[j], m  \rangle \to 0$.
So, $\langle v_*, m \rangle = 0$.

It follows that $v_*$ is in the left kernel of $M$, which is a contradiction.
\end{proof}

\subsection*{Hankel matrices}

We recall the following theorem proved by Kronecker; see e.g.~\cite{sarason2004hankel,al2017hankel} and references therein. 

\begin{theorem}
\label{thm:Kron}
Let $a \in \R^\N$.
The generating function $G=G[a]$ is rational of degree $d$
iff $H = H[a]$ has rank $d$ and $H^{(d-1,d-1)}$ has rank $d$. The function $G$ is irrational iff the rank of $H$ is infinite. 
\end{theorem}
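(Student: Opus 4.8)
The plan is to prove both directions by connecting the Hankel matrix $H=H[a]$ to the recurrence structure forced on the sequence $a$ by rationality of $G$. The key algebraic fact is: a polynomial identity $Q(x)G(x)=P(x)$ with $\deg Q\le d$, $\deg P\le d-1$, $Q$ invertible (so $q_0\ne0$) is equivalent, by comparing coefficients of $x^n$ for $n\ge d$, to the linear recurrence $\sum_{k=0}^{d} q_k\, a_{n-k}=0$ for all $n\ge d$. After dividing by $q_0\ne0$, this expresses $a_n$ as a fixed linear combination of $a_{n-1},\dots,a_{n-d}$ for every $n\ge d$. So the first step is to record this correspondence precisely: $G$ is rational of degree at most $d$ iff $a$ satisfies such a depth-$d$ linear recurrence from index $d$ onward. (For the ``exactly $d$'' refinement one also notes that the minimal such recurrence length is what we want to pin down.)

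Second, I would translate the recurrence into a statement about the rows (equivalently columns, by symmetry) of $H$. Row $i$ of $H$ is $(a_i,a_{i+1},a_{i+2},\dots)$, the shift of row $0$ by $i$. If $a$ satisfies $\sum_{k=0}^d q_k a_{n-k}=0$ for all $n\ge d$, then the same linear combination annihilates the corresponding shifted rows: $\sum_{k=0}^d q_k\,(\text{row } i+d-k)=0$ for every $i\ge0$. Hence rows $d,d+1,d+2,\dots$ each lie in the span of the $d$ preceding rows, so by induction every row lies in the span of rows $0,\dots,d-1$; thus $\rank H\le d$. Conversely, if $\rank H\le d$ then rows $0,\dots,d$ are dependent, giving a relation $\sum_{k=0}^{d}c_k\,(\text{row }k)=0$; one checks $c_d\ne0$ is forced when the rank is exactly $d$ (otherwise rows $0,\dots,d-1$ already dependent, contradicting a rank-$d$ submatrix), then reading off entries of these rows recovers the recurrence $\sum_k c_k a_{n+k}=0$ for all $n\ge0$, which after reindexing is the desired $Q(x)G(x)=P(x)$ with $\deg Q\le d$. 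The invertibility $q_0\ne0$ is the subtle point: it corresponds to $c_d\ne0$ above, and this is exactly where the hypothesis that $H^{(d-1,d-1)}$ has rank $d$ does its work --- it guarantees the leading coefficient of the minimal recurrence does not vanish, i.e. that the denominator $Q$ is genuinely invertible rather than divisible by $x$. I expect this coefficient-nonvanishing bookkeeping to be the main obstacle, and I would handle it by arguing that the minimal-degree annihilating relation among the rows of $H$ must have its top and bottom coefficients nonzero when both $\rank H=d$ and $\rank H^{(d-1,d-1)}=d$.

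Third, for the degree-exactly-$d$ equivalence I would show both implications. If $G$ is rational of degree exactly $d$, the minimal recurrence has length $d$, so rows $0,\dots,d-1$ of $H$ are independent (else a shorter recurrence), and more care shows $H^{(d-1,d-1)}$ --- the top-left $d\times d$ block --- is nonsingular: a dependence among its rows would, combined with minimality, again produce a shorter or degenerate recurrence. Conversely, if $\rank H=d$ and $\rank H^{(d-1,d-1)}=d$, then from the second step $G$ is rational of degree at most $d$ with invertible denominator, and it cannot have degree $<d$ because that would force $\rank H\le d-1$.

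Finally, for the irrational case: $G$ is irrational iff it is rational of no finite degree iff, by the equivalence just established, $\rank H\ge d$ for every $d$ --- i.e. $\rank H=\infty$. (One should be slightly careful that ``rank at most $d$ for some $d$'' versus ``rank exactly some finite value'': if $\rank H$ is finite, say equal to $d_0$, then rows $0,\dots,d_0$ are dependent and the second step produces a finite recurrence, hence rationality; and Lemma~\ref{lem:compactness} is not needed here but the same flavor of reasoning --- a finite-rank infinite matrix has a finite spanning set of rows --- makes the argument clean.) Assembling these pieces gives the theorem.
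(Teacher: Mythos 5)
The paper does not prove this statement: it is quoted as Kronecker's theorem with a pointer to the literature, so there is no internal proof to compare against. Your proposal is the standard proof of Kronecker's theorem and its skeleton is correct. The equivalence between $Q(x)G(x)=P(x)$ (with $q_0\neq 0$, $\deg Q\le d$, $\deg P\le d-1$) and the recurrence $\sum_{k=0}^{d}q_k a_{n-k}=0$ for $n\ge d$ is right; so is the translation into the shift structure of the rows of $H$, which gives $\rank H\le d$ (using $q_0\neq 0$ to solve for the top row), and the converse extraction of an invertible $Q$ from a dependence $\sum_{k=0}^d c_k\,(\text{row }k)=0$, where $c_d\neq 0$ is guaranteed because the nonsingularity of $H^{(d-1,d-1)}$ forces rows $0,\dots,d-1$ to be independent. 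The "degree exactly $d$" bookkeeping and the irrationality equivalence then follow as you say.

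The one step you only gesture at, and where the mechanism you name would not work as stated, is the forward implication that degree exactly $d$ forces $H^{(d-1,d-1)}$ to be nonsingular. A dependence $\sum_{i=0}^{d-1}b_i a_{i+j}=0$ among the rows of the \emph{finite} block holds a priori only for $j=0,\dots,d-1$; it does not by itself "produce a shorter recurrence," since it says nothing about $j\ge d$. The correct argument is a propagation: set $v_j=\sum_{i=0}^{d-1}b_i a_{i+j}$ for all $j\ge 0$. Since $a$ satisfies the order-$d$ recurrence with $q_0\neq 0$ and each $v_j$ is a fixed linear combination of shifts of $a$, the sequence $v$ satisfies the same recurrence, $v_j=-q_0^{-1}\sum_{k=1}^{d}q_k v_{j-k}$ for $j\ge d$; as $v_0=\dots=v_{d-1}=0$, induction gives $v\equiv 0$, i.e.\ a dependence among the full rows $0,\dots,d-1$ of $H$. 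That genuinely does contradict minimality: taking the topmost nonzero coefficient $b_{i^*}$ of such a dependence yields $\sum_{l=0}^{i^*}q_l a_{m-l}=0$ for all $m\ge i^*$ with $q_0=b_{i^*}\neq 0$, hence a rational representation of degree at most $i^*\le d-1$. The same observation (finite rank forces independence of the first $\rank H$ rows, so the extracted recurrence has nonzero leading coefficient) is what makes the "finite rank implies rational" half of the irrationality claim airtight. With those details supplied, the proposal is complete.
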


\begin{claim}
\label{clm:HankelMat}
Let $a \in \R^\N$
and let $H = H[a]$.
If the rank of $H$ is $d$,
then for every $t \leq d$, the first $t$ columns in $H$ are linearly independent.
\end{claim}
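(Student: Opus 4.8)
The plan is to argue by contradiction, extracting from a column dependence a short linear recurrence for $(a_k)$ and showing this forces $\rank(H)$ to be smaller than $d$. Write $c_j = (a_{j+i})_{i \geq 0}$ for the $j$-th column of $H$, so that ``the first $t$ columns'' means $c_0,\ldots,c_{t-1}$.

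Suppose for contradiction that for some $t \leq d$ the columns $c_0,\ldots,c_{t-1}$ are linearly dependent, and take $t$ minimal with this property. By minimality, $c_0,\ldots,c_{t-2}$ are linearly independent and $c_{t-1} \in \mathrm{span}(c_0,\ldots,c_{t-2})$, say $c_{t-1} = \sum_{j=0}^{t-2} \gamma_j c_j$. Reading this equality coordinate by coordinate gives $a_{i+t-1} = \sum_{j=0}^{t-2} \gamma_j a_{i+j}$ for every $i \geq 0$; that is, $(a_k)_{k \geq 0}$ satisfies a linear recurrence of order $t-1$. (For $t=1$ this degenerates to $c_0 = 0$, hence $H=0$ and $\rank(H)=0 < d$, a contradiction; for $t=0$ there is nothing to prove. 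So we may assume $t \geq 2$.)

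Next I would show by induction on $\ell \geq t-1$ that $c_\ell \in \mathrm{span}(c_0,\ldots,c_{t-2})$. The base case $\ell = t-1$ is exactly what was just established. For the step, the Hankel shift structure gives, for each $i \geq 0$, that the $i$-th entry of $c_\ell$ is $a_{i+\ell} = \sum_{j=0}^{t-2} \gamma_j a_{i+\ell-t+1+j}$, which is the $i$-th entry of $\sum_{j=0}^{t-2} \gamma_j c_{\ell-t+1+j}$; thus $c_\ell = \sum_{j=0}^{t-2} \gamma_j c_{\ell-t+1+j}$, and each column appearing on the right has index strictly between $\ell-t+1$ and $\ell-1$ (inclusive), hence lies in $\mathrm{span}(c_0,\ldots,c_{t-2})$ by the induction hypothesis. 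Consequently every column of $H$ lies in $\mathrm{span}(c_0,\ldots,c_{t-2})$, so $\rank(H) \leq t-1 \leq d-1$, contradicting $\rank(H) = d$.

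I do not anticipate a genuine obstacle here; the argument is elementary once the minimal-$t$ reduction is in place. The one point requiring care is the induction step, where the Hankel/shift identity must be applied with correctly shifted starting indices so that the columns produced are ones already known to lie in the span of $c_0,\ldots,c_{t-2}$. Alternatively, one could deduce the claim from Kronecker's theorem (\Cref{thm:Kron}), but the direct recurrence argument above is shorter and self-contained.
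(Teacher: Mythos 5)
Your proof is correct and follows essentially the same route as the paper's: locate the first column that is a combination of its predecessors, and use the Hankel shift structure to propagate that dependence to all later columns, bounding the rank by $t-1 < d$. You simply spell out the induction and the index bookkeeping that the paper's two-sentence proof leaves implicit.
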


\begin{proof}
If the first $t$ columns of $H$ are linearly dependent, then some column $i \leq t$ is a linear combination of columns $<i$. Because $H$ is an infinite Hankel matrix, all columns after column $i$ are also spanned by columns $<i$ (with the same coefficients).
\end{proof}

\subsection*{Rational functions}

\begin{proof}[Proof of \Cref{thm:ration}]
   Assume that $G=G[a]$ is a rational function of degree $d$
and let $H = H[a]$ be its Hankel matrix.
It is known that because $G[a]$
is rational of degree $d$, 
it holds that $\cost_a(t) \leq d$ for all $t$; see (1.8) and Lemma~3.2 in~\cite{dj2024efficient}.
Their algorithm as stated requires $\beta(0) = d$, but it is possible to implement it with $\beta(t) = \min \{t+1, d\}$.
For $t < d$, let $\state{t}$ be a copy of the input stream $\state{t} = (z_0,\ldots,z_t)$ with the obvious output map $\outmap{t}$.
For $t \geq d$, let $\state{d}$ be the corresponding state from~\cite{dj2024efficient} and use their (fixed) maps $\{(\smap{t}, \outmap{t}) : t\geq d \}$ for all future computations.
It follows that $\cost_a(t) \leq \min\{t+1, d\}$.

Now, let $t<d$.
By \Cref{thm:Kron}, the matrix $H^{(d-1, d-1)}$ has full rank. 
In particular, $\rank(H^{(d-1, t)}) = t+1$.
Invoking \Cref{lemma:small_buffer_small_rank}, we get $\cost_a(t;t+d-1) \geq t+1$.
\end{proof}

\subsection*{Irrational functions}

\begin{proof}[Proof of \Cref{thm:irration}]
Because $G[a]$ is irrational,
the Hankel matrix $H[a]$ has infinite rank.
By \Cref{clm:HankelMat}, for all $t \in\N$, the first $t+1$ columns in $H$ are linearly independent.
Fix $t$ and let $M$ be the first $t+1$ columns of $H$.
By compactness (\Cref{lem:compactness}), 
there is some $j$ so that $H^{(j,t)}$
has rank $t+1$.
By \Cref{lemma:small_buffer_small_rank},
we know that $\cost_a(t) \geq t+1$.
The upper bound $\cost_a(t) \leq t+1$ is trivial.
\end{proof}

\section{Finite time horizons}

Asymptotic space complexity can be understood via degrees as rational functions.
The finite time horizon case for irrational functions can be understood via deeper properties of the corresponding Hankel matrices.

\subsection*{Preliminaries}

In this section, we explain some methods for comparing the ranks of different Hankel matrices. 
We shall use the following simple properties.
If $f(x),g(x)$ are generating functions
and $T[f],T[g]$ are the corresponding Toeplitz matrices then
$$T[f \cdot g] = T[f] \cdot T[g]$$
and
$$T[f + g] = T[f] + T[g].$$

\begin{lemma}
\label{lem:comp}
Let $f,g$ be generating functions
and let $\alpha,\beta,\gamma \in \R$ be
so that $f_0 \neq \alpha$, so that
$g_0 \neq 0$
and so that
$$f g = \alpha g + \beta xg + \gamma.$$
Then, for every $d \in \N$,
the rank of $(H[f])^{(d,d)}$ is at least the rank of $(H[g])^{(d,d)}$ minus three.
In addition, when $\beta = 0$,
the rank of $(H[f])^{(d,d)}$ is at least the rank of $(H[g])^{(d,d)}$ minus two.
\end{lemma}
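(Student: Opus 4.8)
The plan is to pass from the pair $(f,g)$ to a pair of power series that are \emph{inverse} to each other, and then to track how the rank of a fixed-size Hankel submatrix changes under three operations: rescaling by a nonzero constant, adding a generating function whose Hankel submatrix has small rank, and inverting a power series with nonzero constant term. Concretely, set $\tilde f := f - \alpha - \beta x$, so that the hypothesis reads $\tilde f \cdot g = \gamma$ as formal power series. Comparing constant terms gives $f_0 g_0 = \alpha g_0 + \gamma$, i.e.\ $\gamma = (f_0-\alpha)g_0 \neq 0$; in particular $\tilde f_0 = f_0-\alpha \neq 0$, so $\tilde f$ is invertible in $\R[[x]]$ and $g = \gamma\,\tilde f^{-1}$. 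Since scaling a sequence by a nonzero constant leaves the rank of every Hankel submatrix unchanged, $\rank (H[g])^{(d,d)} = \rank (H[\tilde f^{-1}])^{(d,d)}$, and it suffices to compare $(H[f])^{(d,d)}$ with $(H[\tilde f])^{(d,d)}$, and $(H[\tilde f])^{(d,d)}$ with $(H[\tilde f^{-1}])^{(d,d)}$.

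The first comparison is bookkeeping. By linearity of $a \mapsto H[a]$, we have $(H[\tilde f])^{(d,d)} = (H[f])^{(d,d)} - \alpha (H[1])^{(d,d)} - \beta(H[x])^{(d,d)}$, and the subtracted matrix is supported on the top-left $2\times 2$ corner, where it equals $\bigl(\begin{smallmatrix}\alpha&\beta\\\beta&0\end{smallmatrix}\bigr)$; hence its rank is at most $2$ in general and at most $1$ when $\beta=0$ (the case $d=0$ being trivial). Therefore $\rank (H[f])^{(d,d)}$ and $\rank (H[\tilde f])^{(d,d)}$ differ by at most $2$, resp.\ at most $1$.

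The second comparison is the heart of the matter, and I would isolate it as a lemma: if $h_0\neq 0$, then $\bigl|\rank (H[h])^{(d,d)} - \rank (H[h^{-1}])^{(d,d)}\bigr| \leq 1$. To prove it, let $\hat T_h$ be the $(2d{+}1)\times(2d{+}1)$ truncation of the lower-triangular Toeplitz matrix $T[h]$. Two observations drive the argument. First, truncation commutes with multiplication for lower-triangular matrices, so $\hat T_h\,\hat T_{h^{-1}} = \hat T_1 = I$, whence $\hat T_{h^{-1}} = \hat T_h^{-1}$. Second, reading off entries, $(H[h])^{(d,d)}$ equals the submatrix of $\hat T_h$ on rows $\{d,\dots,2d\}$ and columns $\{0,\dots,d\}$ up to reversing the column order (since $(\hat T_h)_{d+i,\,d-j} = h_{i+j}$), and the same holds for $h^{-1}$ with $\hat T_h^{-1}$. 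Now split $\{0,\dots,2d\}$ into $\{0,\dots,d\}\cup\{d{+}1,\dots,2d\}$; lower-triangularity kills the upper-right block, so $\hat T_h = \bigl(\begin{smallmatrix}B_{11}&0\\B_{21}&B_{22}\end{smallmatrix}\bigr)$ with $B_{11},B_{22}$ invertible triangular Toeplitz blocks, and hence $\hat T_h^{-1} = \bigl(\begin{smallmatrix}B_{11}^{-1}&0\\-B_{22}^{-1}B_{21}B_{11}^{-1}&B_{22}^{-1}\end{smallmatrix}\bigr)$. Extracting the rows $\{d,\dots,2d\}$ and columns $\{0,\dots,d\}$ from each, and cancelling the invertible factor $B_{11}^{-1}$ on the right and $B_{22}$ on the left of the bottom block, yields
$$\rank (H[h])^{(d,d)} = \rank\begin{pmatrix} w^\top\\ B_{21}\end{pmatrix}, \qquad \rank (H[h^{-1}])^{(d,d)} = \rank\begin{pmatrix} e^\top\\ B_{21}\end{pmatrix}$$
for a suitable row vector $w$ and a standard basis row vector $e$. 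Both ranks therefore lie in $\{\rank B_{21},\, \rank B_{21}+1\}$ and so differ by at most $1$.

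Chaining the two comparisons gives $\rank (H[f])^{(d,d)} \geq \rank (H[\tilde f])^{(d,d)} - 2 \geq \rank (H[\tilde f^{-1}])^{(d,d)} - 3 = \rank (H[g])^{(d,d)} - 3$, and when $\beta=0$ the first step costs only $1$, yielding $-2$ overall, as claimed. The main obstacle is the inversion lemma; I expect the only real subtlety to be realizing that one must truncate at size $2d{+}1$ (not $d{+}1$), so that the off-diagonal block $B_{21}$ is literally common to $\hat T_h$ and $\hat T_h^{-1}$ once the triangular factors are peeled off — this shared block is exactly what pins both ranks to within $1$ of $\rank B_{21}$, and everything else is linear-algebra routine.
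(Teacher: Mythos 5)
Your proof is correct. The hypotheses are used exactly where needed: $(f_0-\alpha)g_0=\gamma\neq 0$ makes $\tilde f=f-\alpha-\beta x$ invertible with $g=\gamma\tilde f^{-1}$; the affine correction $H[\alpha+\beta x]$ is supported on the top-left $2\times 2$ corner and has rank $\le 2$ (rank $\le 1$ when $\beta=0$); and your inversion lemma is sound, since multiplying the extracted submatrix of $\hat T_h^{-1}$ on the right by $B_{11}$ and its bottom block on the left by $-B_{22}$ are rank-preserving and turn the last row of $B_{11}^{-1}$ into a standard basis vector. The chain $-2-1$ (resp.\ $-1-1$) then gives the stated bounds.

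The route differs from the paper's in organization, though the engine is the same: both arguments exploit the block-triangular structure of a truncated lower-triangular Toeplitz matrix and the fact that the Hankel block sits (up to column reversal) inside its lower-left corner. The paper truncates at size $2d\times 2d$, splits into four $d\times d$ blocks, and reads the lower-left block of the identity $T'[\tilde f]\,T'[g]=T'[\gamma]$ directly, getting $T_{21}[\tilde f]=-T_{22}[\tilde f]\,T_{21}[g]\,(T_{11}[g])^{-1}$; its losses are $1$ for the $\beta x$ term and $2$ for passing from $T_{21}$ (which misses one row and one column of the Hankel block) back to $(H[\cdot])^{(d,d)}$. You instead truncate at size $2d{+}1$ with an asymmetric split $(d{+}1)+d$, which lets the full $(d{+}1)\times(d{+}1)$ Hankel block appear as a literal submatrix, and you isolate the reciprocal step as a standalone lemma ($h\mapsto h^{-1}$ moves the rank of $(H[h])^{(d,d)}$ by at most $1$), paying the remaining $2$ (resp.\ $1$) for the affine shift at the Hankel level. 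The totals coincide, but your modular inversion lemma is reusable on its own and makes transparent exactly where each unit of co-rank is lost; the paper's version avoids introducing $\tilde f^{-1}$ and handles general $\gamma$ without normalizing.
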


\begin{proof}
Fix $d$, and denote by $T'[h]$
the principal $2d \times 2d$ submatrix of $T[h]$.
Because the ``upper half'' of $T[h]$ is zero,
$$T'[f \cdot g] = T'[f] \cdot T'[g].$$
Therefore, by assumption,
$$T'[f -\alpha  -\beta x] \cdot T'[g]
=  T'[\gamma].$$
The matrix $T'[h]$ can be partitioned into four equal blocks of size $d \times d$ as
\begin{align*}T'[h] = \left[
\begin{array}{cc}
T_{11}[h] & T_{12}[h]  \\
T_{21}[h] & T_{22}[h]  \\
\end{array}\right].
\end{align*}
By looking at the lower-left block, $T_{21}$,
we see that
\begin{align*}
T_{21}[f -\alpha  -\beta x] \cdot T_{11}[g]
+ T_{22} [f -\alpha  -\beta x] \cdot T_{21}[g]
= T_{21}[\gamma] =0 ,
\end{align*}
because $T_{21}[\gamma]$ is zero.
The matrices $T_{11}[g]$ and $T_{22} [f -\alpha  -\beta x]$ have full rank because $f_0 \neq \alpha$ and $g_0 \neq 0$.
The matrix 
$$M\coloneqq- T_{22} [f -\alpha  -\beta x] \cdot T_{21}[g] \cdot (T_{11}[g])^{-1}$$
has the same rank as $T_{21}[g]$
and
\begin{align*}
T_{21}[f -\alpha  -\beta x]   = M  .
\end{align*}
So,
$$T_{21}[f] 
= M  + T_{21}[\alpha +\beta x]
= M  + T_{21}[\beta x].$$
The matrix $T_{21}[\beta x]$ has at most one non-zero entry
(so its rank is at most one).
The matrix $T_{21}[h]$ is obtained from
$(H[h])^{(d,d)}$ by deleting the first row and last column, and then reversing the columns.
The rank of $(H[f])^{(d,d)}$ is hence at least the rank of $T_{21}[f]$.
The rank of $M$, which is equal to the rank of $T_{21}[g]$, is at least the rank of
$(H[g])^{(d,d)}$ minus two.
\end{proof}

\subsection*{Examples}
Let us start with the Catalan generating function $G_C$.
Denote by $H_C$ the Hankel matrix
that corresponds to the Catalan sequence.
We note the following folklore lemma
(see e.g.~\cite{Aigner1999}).

\begin{lemma}\label{lemma:catalan-full-rank}
For every $d \in \N$,
$$\det(H_C^{(d,d)})=1.$$
In particular, $H_C^{(d,d)}$ has full rank. 
\end{lemma}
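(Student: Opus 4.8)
The plan is to prove the determinant identity $\det(H_C^{(d,d)}) = 1$ for all $d \in \N$ by induction, using the recursive structure of the Catalan numbers. The main tool will be the convolution identity $C_{n+1} = \sum_{i=0}^{n} C_i C_{n-i}$, which encodes the functional equation $G_C = 1 + x G_C^2$. First I would set up the induction: the base case $d = 0$ is immediate since $H_C^{(0,0)} = (C_0) = (1)$, so its determinant is $1$.

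For the inductive step, I would perform row and column operations on $H_C^{(d,d)}$ that do not change the determinant, aiming to reduce it to a matrix whose determinant is $\det(H_C^{(d-1,d-1)})$. The standard approach exploits the fact that the Hankel matrix of the Catalan sequence and the Hankel matrix of the ``shifted'' Catalan sequence $(C_{n+1})_n$ are related through the quadratic relation. One clean way: introduce the lower-triangular matrix $B$ with entries given by the ballot numbers (or the coefficients appearing in $G_C^k$), and verify that $B \cdot H_C^{(d,d)} \cdot B^T$ is either upper/lower triangular with ones on the diagonal, or block-triangular with an identity block and a copy of a smaller Catalan Hankel matrix. Since $B$ is unit lower-triangular, $\det(B) = 1$, and the identity $\det(H_C^{(d,d)}) = \det(B H_C^{(d,d)} B^T)$ then reduces the computation either directly to $1$ or to $\det(H_C^{(d-1,d-1)})$, closing the induction.

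Alternatively, and perhaps more in the spirit of the surrounding text (which already works with Toeplitz-matrix factorizations of generating function identities), I would use $G_C = 1 + x G_C^2$ rewritten as $x G_C \cdot G_C = G_C - 1$, i.e.\ in the form of \Cref{lem:comp}-style relations, to express the shifted Hankel matrix $(H[xG_C^2])^{(d,d)}$ in terms of $H_C^{(d,d)}$ via Toeplitz products, and track how a single Schur-complement step peels off a unit pivot. The cited reference \cite{Aigner1999} gives the classical combinatorial proof via lattice-path / orthogonal-polynomial arguments (the Catalan numbers are the moments of a measure whose orthogonal polynomials have a three-term recurrence with unit coefficients, forcing all Hankel determinants to equal $1$), and I would be content to follow that route: exhibit the monic orthogonal polynomials $p_n(x)$ for the moment sequence $(C_n)$, note the Hankel determinant equals the product of the leading coefficients in the recurrence (all equal to $1$ here), and conclude.

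The main obstacle I anticipate is making the row/column reduction fully explicit and verifying the claimed triangular (or block) structure of $B H_C^{(d,d)} B^T$ — this requires knowing the precise coefficients $B_{n,k}$ (these are the entries of the inverse of the Catalan moment matrix, given by signed ballot numbers) and checking a summation identity of the form $\sum_{k} B_{i,k} C_{k+\ell}$ collapses appropriately. This is a routine but slightly fiddly generating-function computation; once it is in place, the determinant claim and the ``full rank'' corollary follow immediately, since a determinant equal to $1$ is in particular nonzero.
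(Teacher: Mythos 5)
The paper offers no proof of this lemma at all: it is stated as a folklore result with a citation to Aigner (1999), so your plan to follow that reference's orthogonal-polynomial argument (the Catalan numbers are the moments of a sequence of monic orthogonal polynomials whose three-term recurrence has all coefficients $\lambda_n = 1$, whence every Hankel determinant equals $1$) is precisely the paper's implicit approach. Your alternative sketches --- induction via the convolution identity $C_{n+1}=\sum_i C_iC_{n-i}$, or the unit-lower-triangular ballot-number factorization of $H_C^{(d,d)}$ --- are likewise standard and correct in outline; none is executed in full detail, but since the paper itself settles for a citation to this classical fact, that is adequate here.
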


Denote by $H_{1/2}$ the Hankel matrix
that corresponds to $G_{1/2}$.
An $n \times n$ matrix has co-rank $r$
if its rank is $n-r$.
The full rank of $(H_C)^{(d,d)}$ can be used with \Cref{lem:comp} to show that $(H_{1/2})^{(d,d)}$ has a co-rank of at most four.
The rank of $(H_{1/2})^{(d,d)}$ is, in fact, known to be full.
\begin{lemma}\label{lemma:square-root}
    For every $d \in \N$, $(H_{1/2})^{(d,d)}$ has full rank. 
\end{lemma}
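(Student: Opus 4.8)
The plan is to find an explicit closed-form for the entries of the Hankel matrix $H_{1/2}$ and compute the determinant of $(H_{1/2})^{(d,d)}$ directly, showing it is nonzero for all $d$. Recall that $(H_{1/2})_{i,j} = a_{i+j}$ where $a_k = 2^{-2k}\binom{2k}{k}$. The key observation is that the central binomial coefficients have a product structure: $a_k = \prod_{m=1}^{k} \frac{2m-1}{2m}$, so that $a_{i+j}$ is (essentially) a ratio of products, and Hankel determinants of such sequences are governed by the classical theory of Hankel determinants of moment sequences. Concretely, $a_k = \frac{(1/2)_k}{k!}$ in Pochhammer notation, which means $(a_k)_{k\ge 0}$ is the moment sequence $a_k = \int_0^1 x^k \,d\nu(x)$ for an explicit Beta-type measure $d\nu(x) = \frac{1}{\pi\sqrt{x(1-x)}}\,dx$ on $[0,1]$ (the arcsine distribution, whose $k$-th moment is indeed $2^{-2k}\binom{2k}{k}$). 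Since $\nu$ is a positive measure with infinite support, the general theory of orthogonal polynomials guarantees that every principal Hankel minor $\det((H_{1/2})^{(d,d)})$ is strictly positive; in fact it equals $\prod_{m=1}^{d}\beta_1\cdots\beta_m$ for the (positive) Jacobi parameters $\beta_m$ of $\nu$, which can be read off from the known continued-fraction expansion of $G_{1/2}$.

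The steps, in order, would be: (i) verify $a_k = \int_0^1 x^k\, d\nu(x)$ for the arcsine measure $\nu$, e.g. by checking the Beta-integral $\int_0^1 \frac{x^k}{\pi\sqrt{x(1-x)}}dx = \frac{1}{\pi}B(k+\tfrac12,\tfrac12) = 2^{-2k}\binom{2k}{k}$; (ii) invoke the standard fact that for a positive measure with infinitely many points of support, the Hankel determinants $\det((H[a])^{(d,d)})$ are all strictly positive (this is the Hamburger/Hausdorff moment-problem criterion, or equivalently the nondegeneracy of the Gram matrix of $1,x,\dots,x^d$ in $L^2(\nu)$); (iii) conclude full rank. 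Alternatively, for a self-contained route one could skip the measure and instead use the known continued fraction $G_{1/2}(x) = \cfrac{1}{1 - \cfrac{x/2}{1 - \cfrac{x/4}{1 - \cdots}}}$ with positive partial numerators, from which the Hankel determinant formula $\det((H_{1/2})^{(d,d)}) = \prod_{m=1}^{d-1}(x\text{-coefficients})^{d-m} \neq 0$ follows by Heilermann's formula; this ties in naturally with the paper's remark about continued fractions and Hankel matrices.

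An even cleaner self-contained option, which I would probably choose for the write-up, is to combine \Cref{lemma:catalan-full-rank} with \Cref{lem:comp} but with a \emph{sharper} algebraic identity than the generic one used for the co-rank-four bound: instead of relating $G_{1/2}$ to $G_C$ via a relation with an error term, directly exhibit an LU-type factorization. One checks the identity $(1-x)G_{1/2}(x)^2 = G_{1/2}(x)\cdot$(something)$\,+\,$(correction) is too lossy; better is to use $G_{1/2}(x)^2 = G_1(x) = \frac{1}{1-x}$, i.e.\ $(1-x)G_{1/2}^2 = 1$, which gives $T_{1/2}^2 = T_1 T[1-x]^{-1}$... but the cleanest is: since $(1-x)G_{1/2}(x)^2=1$, the sequence $a$ satisfies a quadratic convolution identity, and one shows the Hankel matrix factors through a Cholesky decomposition with nonzero pivots given by an explicit product formula (the pivots are $\prod 2^{-2m}\binom{2m}{m}\cdot(\text{something})$, all positive).

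\textbf{Main obstacle.} The genuine work is step (ii)/the determinant evaluation: turning the structural fact ``$a$ is a positive moment sequence'' into ``$\det((H_{1/2})^{(d,d)})\neq 0$'' either requires quoting the moment-problem machinery (clean but a black box) or reproducing Heilermann's continued-fraction determinant formula and identifying the Jacobi parameters of $G_{1/2}$ as the positive numbers $\tfrac12,\tfrac14,\tfrac14,\dots$ (self-contained but a computation). The subtlety is that \Cref{lem:comp} alone only yields co-rank $\le 4$, so to get \emph{full} rank one cannot avoid some genuinely extra input — either an explicit determinant formula or the positivity of the associated measure. I expect the final proof to simply cite the known Hankel-determinant evaluation for the central binomial / arcsine sequence (it is classical, e.g.\ $\det((H_{1/2})^{(d,d)}) = 2^{-d(d+1)}$ up to normalization) and note that it is nonzero, with \Cref{lem:comp} and \Cref{lemma:catalan-full-rank} recorded as the reason the weaker co-rank bound already ``almost'' suffices.
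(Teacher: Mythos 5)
Your main route is correct, and it is genuinely different from (though in the same spirit as) the paper's proof. The paper disposes of the lemma in one line: it cites Proposition~8 of Aigner (1999), which evaluates $\det\bigl((H[1/\sqrt{1-4x}])^{(d,d)}\bigr)=2^{d}$, and then observes that the rescaling $x\mapsto x/4$ (a diagonal congruence $D H D$ with $D=\mathrm{diag}(4^{-i})$) transfers full rank to $(H_{1/2})^{(d,d)}$. You correctly anticipated that some explicit extra input beyond \Cref{lem:comp} is unavoidable, and your primary argument supplies it differently: you identify $a_k=2^{-2k}\binom{2k}{k}$ as the $k$-th moment of the arcsine law $\tfrac{1}{\pi\sqrt{x(1-x)}}\,dx$ on $[0,1]$ (the Beta-integral check is right: $\tfrac1\pi B(k+\tfrac12,\tfrac12)=\tfrac{\Gamma(k+1/2)}{\sqrt\pi\,k!}=2^{-2k}\binom{2k}{k}$), so that $(H_{1/2})^{(d,d)}$ is the Gram matrix of $1,x,\dots,x^d$ in $L^2(\nu)$ for a measure with infinite support, hence positive definite. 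This is a complete and correct proof, and it conveniently works directly with $1/\sqrt{1-x}$, so no rescaling step is needed. What each approach buys: the paper's citation gives the exact determinant value with zero work; your moment-sequence argument is more robust and more general (it needs only positivity, not the precise evaluation, and applies verbatim to any Stieltjes moment sequence with infinitely supported measure), at the cost of invoking the orthogonal-polynomial/Gram-matrix machinery. Your third, ``Cholesky'' alternative is not actually worked out (the pivots are never computed) and should be dropped from a final write-up, but since your first route stands on its own this does not affect correctness.
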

\begin{proof}
    Proposition~8 in \cite{Aigner1999} states that $\det(H[1/\sqrt{1-4x}]^{(d,d)}) = 2^d$ implying the full rank of $(H[1/\sqrt{1-4x}])^{(d,d)}$.
    It follows that $(H[1/\sqrt{1-x}])^{(d,d)}$ also has full rank.
\end{proof}
\begin{proof}[Proof of~\Cref{thm:g_one_half_is_hard}]
   The lower bound follows from \Cref{lemma:square-root} and \Cref{lemma:small_buffer_small_rank}.
   The upper bound $\cost_a(t;2t)\leq t+1$ is trivial.
\end{proof}

Finally, recall the generating function
$$G_{\lambda,\mu} = \frac{1}{(1-\lambda x)(1-\mu x)}$$
for $0\leq \mu < \lambda \leq 1$, and its square root $\sqrt{G_{\lambda, \mu}}$.
Our main lemma here is the following.

\begin{lemma}\label{lemma:ab_gf_full_rank}
Let $b \neq c$ be non-zero complex numbers
and let
\begin{equation*}
        G(x) = \frac{8}{(b-c)^2} \frac{1 - \frac{b + c}{2}x - \sqrt{(1-b x)(1- c x)}}{x^2} .
    \end{equation*}
    For all $d \in \N$, the matrix $(H[G])^{(d,d)}$ has full rank.
\end{lemma}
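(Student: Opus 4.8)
The plan is to show that $G$ satisfies a self-similar quadratic functional equation, recognize the resulting \emph{periodic} Jacobi continued fraction, and read off the Hankel determinant from the classical product formula, which turns out to be a nonzero power of $(b-c)^2$.

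First I would verify that $G$ is a well-defined formal power series. Let $S(x)=\sqrt{(1-bx)(1-cx)}$ denote the unique power-series square root of $(1-bx)(1-cx)$ with constant term $1$; then $S(x)=1-\frac{b+c}{2}x+O(x^2)$, so the numerator $1-\frac{b+c}{2}x-S(x)$ is divisible by $x^2$, and comparing $x^2$-coefficients gives $G(0)=1$. Next I would derive the functional equation. From $\frac{(b-c)^2}{8}x^2G = 1-\frac{b+c}{2}x-S$ we get $S = \big(1-\frac{b+c}{2}x\big) - \frac{(b-c)^2}{8}x^2 G$; squaring and using $S^2=(1-bx)(1-cx)=\big(1-\frac{b+c}{2}x\big)^2-\frac{(b-c)^2}{4}x^2$, the terms $\big(1-\frac{b+c}{2}x\big)^2$ cancel and the identity reduces to
$$\frac{(b-c)^2}{4}x^2\left(\Big(1-\tfrac{b+c}{2}x\Big)G - \tfrac{(b-c)^2}{16}x^2 G^2\right) = \frac{(b-c)^2}{4}x^2 .$$
Since $b\neq c$, the factor $\frac{(b-c)^2}{4}x^2$ is a nonzerodivisor in the integral domain $\mathbb{C}[[x]]$ and may be cancelled, yielding $\big(1-\frac{b+c}{2}x\big)G - \frac{(b-c)^2}{16}x^2 G^2 = 1$, i.e.
$$G = \cfrac{1}{\,1 - \tfrac{b+c}{2}x - \tfrac{(b-c)^2}{16}x^2 G\,}.$$
Unfolding this identity $d+1$ times (the remaining tail only affects coefficients of order $\ge x^{2d+2}$, hence does not change the entries $a_0,\dots,a_{2d}$ of $(H[G])^{(d,d)}$) exhibits the first $2d+1$ coefficients of $G$ as those of the periodic J-fraction with $\mu_0=1$, $\lambda_i=\frac{b+c}{2}$ for all $i\ge 0$, and $\nu_i=\frac{(b-c)^2}{16}$ for all $i\ge 1$.

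Finally, I would invoke the classical Hankel determinant formula for Jacobi continued fractions (see e.g.\ \cite{wall2018analytic}): $\det\big((a_{i+j})_{0\le i,j\le d}\big)=\mu_0^{d+1}\prod_{i=1}^{d}\nu_i^{\,d+1-i}$. In our situation this equals $\big(\frac{(b-c)^2}{16}\big)^{d(d+1)/2}$, which is nonzero exactly because $b\neq c$. Hence $(H[G])^{(d,d)}$ is nonsingular, i.e.\ has full rank, for every $d\in\N$.

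The only steps requiring care, rather than any real obstacle, are the squaring/cancellation that produces the functional equation (one must check every division is by a unit or by the single nonzerodivisor $\frac{(b-c)^2}{4}x^2$) and the appeal to the Hankel determinant formula: one should note that the product formula is a purely algebraic identity valid for formal power series over $\mathbb{C}$, so no positive-definiteness or real-moment hypothesis is needed — we only use that the determinant is nonzero. Once the self-similar continued fraction is in hand, full rank is immediate.
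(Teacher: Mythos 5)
Your proof is correct, and it rests on the same core fact as the paper's: $G$ is the unique power-series root of a quadratic equation that makes it a \emph{periodic} Jacobi continued fraction, so its Hankel determinants are given by the product formula $\mu_0^{d+1}\prod_{i=1}^{d}\nu_i^{\,d+1-i}$, which here equals $\bigl(\tfrac{(b-c)^2}{16}\bigr)^{d(d+1)/2}\neq 0$ --- exactly the value the paper obtains. The execution differs in how the quadratic relation is reached. The paper invokes Junod's theorem, whose hypothesis $w(x)=x(\alpha+\beta w+\gamma w^2)$ is the period-one J-fraction condition in disguise; verifying it requires a fairly long series expansion of $\sqrt{(1-bx)(1-cx)}$ to check $W(0)=0$, then an explicit computation of $w(x)$ and a match against Junod's closed form to extract $\alpha=\tfrac{(b-c)^2}{16}$, $\beta=\tfrac{b+c}{2}$, $\gamma=1$. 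You instead isolate $S=\sqrt{(1-bx)(1-cx)}$ from the definition of $G$ and square, which kills the square root immediately and yields $\bigl(1-\tfrac{b+c}{2}x\bigr)G-\tfrac{(b-c)^2}{16}x^2G^2=1$ after cancelling the nonzerodivisor $\tfrac{(b-c)^2}{4}x^2$; this is shorter and avoids the expansion entirely (you still need the short $O(x^2)$ expansion to see that $G$ is a genuine power series with $G(0)=1$, and you do this). You then cite the classical J-fraction Hankel determinant formula (available in the Wall reference the paper already uses) rather than Junod's packaged statement; your remarks that the formula is a formal algebraic identity needing no positivity, and that the infinite periodic fraction determines the coefficients $a_0,\dots,a_{2d}$ after finitely many unfoldings, are the right points to flag and are handled adequately. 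Net effect: same theorem, same determinant, but a more self-contained derivation of the functional equation at the cost of appealing to the classical formula directly instead of a ready-made theorem.
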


Our argument relies on the following theorem by Junod (Theorem 2 in \cite{Junod2003}).

\begin{theorem}
\label{thm:determinant_junod}
    Consider a generating function 
    $$G(x) = \frac{1}{1 - W(x)}$$ with $W(0) = 0$. Suppose that $w(x) = \frac{W(x)}{x} - W'(0)$ satisfies $$w(x) = x(\alpha + \beta w(x) + \gamma w(x)^2)$$ for some parameters $\alpha\neq 0, \beta$ and $\gamma$ in $\mathbb{C}$.
    Then for any $d\in\N$:
    \begin{equation*}
        \det((H[G])^{(d, d)}) = \alpha^{d(d+1)/2}\gamma^{d(d-1)/2}\enspace.
    \end{equation*}
\end{theorem}

Junod also remarks that $w(x)$ can be written as
\begin{align*}
w(x) = \frac{F(x)-1}{x F(x) } - F'(0) 
\end{align*}
and as
\begin{equation*}
    w(x) = \frac{1 - \beta x - \sqrt{(1-\beta x)^2 - 4\gamma\alpha x^2}}{2\gamma x},
\end{equation*}
when $\gamma \neq 0$.

\begin{proof}[Proof of \Cref{lemma:ab_gf_full_rank}]
    To invoke \Cref{thm:determinant_junod}, we will first prove that $W(x) = 1 - 1/G(x)$ satisfies $W(0) = 0$.
    We proceed:
    \begin{align*}
        W(x) &= 1 - \frac{(b-c)^2}{8} \frac{x^2}{1-\frac{b+c}{2}x - \sqrt{(1-bx)(1-cx)}}.
        \end{align*}
For some generating functions $R_1,R_2,R_3$,
we can write
\begin{align*}
&\sqrt{(1-bx)(1-cx)} \\
& =  \sqrt{1 - (b+c)x + bc x^2}  \\
& = 1 - \frac{(b+c)x}{2} + \frac{bc x^2}{2}
- \frac{x^2(b+c-bc x)^2}{8} -  \frac{x^3(b+c-bc x)^3}{16} + x^4 R_1 \\
& = 1 - \frac{(b+c)x}{2} +x^2 \Big( \frac{bc }{2} - \frac{(b+c-bc x)^2}{8} - 
 \frac{x(b+c-bc x)^3}{16} + x^2 R_1 \Big) \\
& = 1 - \frac{(b+c)x}{2} + x^2 \Big( \frac{bc }{2}
- \frac{b^2 + c^2 + 2bc}{8}  
+ \frac{2bc(b+c)}{8} x - \frac{(b+c)^3}{16}x+
x^2 R_2 \Big) \\
& =1 - \frac{(b+c)x}{2} - \frac{x^2 (b-c)^2}{8} \Big( 1 + \frac{b+c}{2} x +
x^2 R_3 \Big) .
\end{align*}
So,
   \begin{align*}
        W(x) 
       & = 1- \frac{1}{ 1 + \frac{b+c}{2} x +
x^2 R_3},
        \end{align*}
and we can conclude that $W(0)=0$.
It also follows that $G'(0) = \frac{b+c}{2}$ and by the remark above we can conclude
\begin{align*}
w(x) 
& = \frac{G(x)-1}{x G(x) } - G'(0) \\
& = \frac{\frac{8}{(b-c)^2} \frac{1 - \frac{b + c}{2}x - \sqrt{(1-b x)(1- c x)}}{x^2}-1}{\frac{8}{(b-c)^2} \frac{1 - \frac{b + c}{2}x - \sqrt{(1-b x)(1- c x)}}{x}} - \frac{b+c}{2} \\
& = \frac{8(1 - \frac{b + c}{2}x - \sqrt{(1-b x)(1- c x)})- (b-c)^2x^2}{8x(1 - \frac{b + c}{2}x - \sqrt{(1-b x)(1- c x)})} - \frac{b+c}{2} .
\end{align*}
We claim that
\begin{align*}
 &\frac{8(1 - \frac{b + c}{2}x - \sqrt{(1-b x)(1- c x)})- (b-c)^2x^2}{8x(1 - \frac{b + c}{2}x - \sqrt{(1-b x)(1- c x)})} \\
& \qquad = \frac{x - \frac{x}{2} (1 - \frac{b + c}{2}x + \sqrt{(1-b x)(1- c x)})}{ x^2}.
\end{align*}
Indeed, this is equivalent to 
\begin{align*}
 &x^2 \Big( 8 \Big(1 - \frac{b + c}{2}x - \sqrt{(1-b x)(1- c x)} \Big)- (b-c)^2x^2\Big)  \\
& \qquad = \Big( 8x(1 - \frac{b + c}{2}x - \sqrt{(1-b x)(1- c x)}) \Big) \\
& \qquad \qquad \cdot \Big( x -\frac{x}{2} (1 - \frac{b + c}{2}x + \sqrt{(1-b x)(1- c x)})\Big) 
\end{align*}
which is correct because
\begin{align*}
& \Big( 8x(1 - \frac{b + c}{2}x - \sqrt{(1-b x)(1- c x)}) \Big)  \frac{x}{2} \Big(1 - \frac{b + c}{2}x + \sqrt{(1-b x)(1- c x)}\Big) \\
& =  4x^2 \Big(1 - \frac{b + c}{2}x - \sqrt{(1-b x)(1- c x)}\Big) \Big( 1 - \frac{b + c}{2}x + \sqrt{(1-b x)(1- c x)}\Big) \\
& =  4x^2 \Big( \Big( 1 - \frac{b + c}{2}x \Big)^2 - (1-b x)(1- c x)\Big) \\
& =  4x^2 \Big( \Big(\frac{b + c}{2}x \Big)^2 - bc x^2) \Big) \\
& =  4x^2 \Big( 
\frac{b^2 + c^2 + 2bc}{4}x^2 - bc x^2)\Big) \\
& =  x^2 (b-c)^2 x^2  .
\end{align*}
So,
\begin{align*}
w(x) 
& = \frac{x - \frac{x}{2} (1 - \frac{b + c}{2}x + \sqrt{(1-b x)(1- c x)})}{ x^2}
 - \frac{b+c}{2} \\
& = \frac{1 - \frac{1}{2} (1 - \frac{b + c}{2}x + \sqrt{(1-b x)(1- c x)}) - \frac{(b+c)}{2}x}{ x} \\
& = \frac{1 - \frac{b + c}{2}x - \sqrt{(1-b x)(1- c x)}) }{ 2x} .
\end{align*}
As in the remark above,
\begin{equation*}
    w(x) = \frac{1 - \beta x - \sqrt{(1-\beta x)^2 - 4\gamma\alpha x^2}}{2\gamma x} ,
\end{equation*}
with $\gamma =1$,
with $\beta = \frac{b+c}{2}$
and with $\alpha = \frac{(b-c)^2}{16}$.
It follows that the determinant of $(H[F])^{(d, d)}$ is non-zero.
\end{proof}

Denote by $H_{\lambda,\mu}$ the Hankel matrix that corresponds to $\sqrt{G_{\lambda,\mu}}$.

\begin{corollary}\label{cor:harder-sqrt}
For all $0 \leq \mu < \lambda \leq 1$
and for all $d \in \N$, the matrix $(H_{\lambda,\mu})^{(d,d)}$ has co-rank at most five.
\end{corollary}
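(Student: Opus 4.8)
The plan is to derive \Cref{cor:harder-sqrt} from \Cref{lemma:ab_gf_full_rank} and \Cref{lem:comp} by relating $\sqrt{G_{\lambda,\mu}}$ to the function $G$ appearing in \Cref{lemma:ab_gf_full_rank}. First I would write out $\sqrt{G_{\lambda,\mu}}$ explicitly. Since $G_{\lambda,\mu}(x) = \frac{1}{(1-\lambda x)(1-\mu x)}$, we have $\sqrt{G_{\lambda,\mu}}(x) = \frac{1}{\sqrt{(1-\lambda x)(1-\mu x)}}$, and I expect the natural move is to consider the reciprocal, $g(x) := \sqrt{(1-\lambda x)(1-\mu x)}$, whose square-root structure is precisely what the Junod-type machinery in \Cref{lemma:ab_gf_full_rank} handles. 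The function $G$ in \Cref{lemma:ab_gf_full_rank}, with $b = \lambda$ and $c = \mu$, is (up to the normalizing constant $\frac{8}{(b-c)^2}$ and division by $x^2$) just $1 - \frac{\lambda+\mu}{2}x - \sqrt{(1-\lambda x)(1-\mu x)}$; in other words, $G$ captures the ``tail'' of $g(x)$ after subtracting its degree-$\le 1$ part. So $g$, $G$, and $\sqrt{G_{\lambda,\mu}}$ are all linked by affine-and-multiplicative relations of exactly the kind required to invoke \Cref{lem:comp}.

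The key steps, in order, would be: (1) Observe that $g(x) = 1 - \frac{\lambda+\mu}{2}x - \frac{(\lambda-\mu)^2}{8} x^2 G(x)$ (reading off \Cref{lemma:ab_gf_full_rank} with $b=\lambda$, $c=\mu$). Since $G$ has full-rank Hankel submatrices $(H[G])^{(d,d)}$ for all $d$, I want to transfer this through the relation $T[g] \cdot T[\sqrt{G_{\lambda,\mu}}] = I$ (as $g \cdot \sqrt{G_{\lambda,\mu}} = 1$). (2) Apply \Cref{lem:comp} once to pass from $G$ to $g$: writing the relation between $g$ and $G$ in the form $G \cdot \text{(something)} = \alpha G + \beta x G + \gamma$ — actually the cleaner route is to express things so that \Cref{lem:comp} takes $g=G$ (full rank, known) and $f = x^2 G$ or $f = \sqrt{G_{\lambda,\mu}}$, deducing that the relevant Hankel submatrix of $f$ has co-rank bounded by a small constant. (3) Apply \Cref{lem:comp} (or its variant) a bounded number of times to chain through $G \rightsquigarrow g \rightsquigarrow \sqrt{G_{\lambda,\mu}}$, where each application costs at most two or three in rank. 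Counting the losses: one application for $x^2 G \to g$ (a shift, costing $\le 2$ since $\beta = 0$ there, or perhaps handled by the fact that multiplying a generating function by $x^2$ changes the Hankel rank by at most $2$), plus one application for $g \to \sqrt{G_{\lambda,\mu}}$ via $g \cdot \sqrt{G_{\lambda,\mu}} = 1$, which fits the template $f g' = \alpha g' + \beta x g' + \gamma$ with $f = g$, $g' = \sqrt{G_{\lambda,\mu}}$ — but here $g$ is not of the form $\alpha + \beta x + \ldots$, so I need the reciprocal relation, likely $\sqrt{G_{\lambda,\mu}} \cdot \bigl(1 - \frac{\lambda+\mu}{2}x\bigr) = 1 + \frac{(\lambda-\mu)^2}{8} x^2 G(x) \sqrt{G_{\lambda,\mu}}$; rearranging to isolate a form \Cref{lem:comp} accepts. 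Adding up the individual constant losses should total at most five.

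The main obstacle is bookkeeping the exact constant: I need to identify the minimal chain of \Cref{lem:comp}-type reductions connecting $(H[G])^{(d,d)}$ to $(H_{\lambda,\mu}')^{(d,d)}$ where $H_{\lambda,\mu}'$ is the Hankel matrix of $\sqrt{G_{\lambda,\mu}}$ (note: the corollary as stated is about $H_{\lambda,\mu}$, the Hankel matrix of $G_{\lambda,\mu}$ itself, not its square root — so I should double-check whether the intended target is $\sqrt{G_{\lambda,\mu}}$, matching \Cref{thm:g_one_half_weighted_is_hard}, or $G_{\lambda,\mu}$; I will assume it is $\sqrt{G_{\lambda,\mu}}$ since that is what feeds the space lower bound). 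Each hop through \Cref{lem:comp} must be checked to verify the hypotheses $f_0 \neq \alpha$, $g_0 \neq 0$, and that a relation $fg = \alpha g + \beta x g + \gamma$ genuinely holds with the claimed parameters; the degenerate boundary cases $\lambda = 1$ or $\mu = 0$ need a quick sanity check that no constant-term vanishing breaks invertibility. The delicate point is that \Cref{lemma:ab_gf_full_rank} is phrased for the auxiliary $G$, and the algebra already carried out in the proof of that lemma (expressing $w(x)$ in terms of $b,c$) is exactly the identity I need to reuse, so I would cite that computation rather than redo it. Once the chain is fixed, the co-rank bound is just the sum $2 + 3 = 5$ (or $2 + 2 + \text{?}$) of the individual \Cref{lem:comp} losses, giving co-rank at most five as claimed.
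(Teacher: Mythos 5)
Your plan matches the paper's proof: \Cref{lemma:ab_gf_full_rank} gives full rank of $(H[G])^{(d,d)}$ for the auxiliary $G$, one application of \Cref{lem:comp} via the identity $f\cdot G = \big(\tfrac{\lambda+\mu}{2}x-1\big)G+\tfrac{(\lambda-\mu)^2}{4}$ with $f=\sqrt{(1-\lambda x)(1-\mu x)}$ (this is the concrete computation you left open for your first hop) gives co-rank at most three for $(H[f])^{(d,d)}$, and a second application via $\sqrt{G_{\lambda,\mu}}\cdot f = 1$ (so $\alpha=\beta=0$, $\gamma=1$, hence loss two) yields co-rank at most five; you are also right that the relevant Hankel matrix is that of $\sqrt{G_{\lambda,\mu}}$, as needed for \Cref{thm:g_one_half_weighted_is_hard}. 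One caution: in the last step $\sqrt{G_{\lambda,\mu}}$ must play the role of ``$f$'' in \Cref{lem:comp} (the function whose rank is bounded from below) and $\sqrt{(1-\lambda x)(1-\mu x)}$ the role of ``$g$''; your assignment ``$f=g$, $g'=\sqrt{G_{\lambda,\mu}}$'' reverses these and would lower-bound the rank of the matrix you already control, though since the product relation is the same either way the fix is only a matter of bookkeeping.
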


\begin{proof}
    Start with $f(x) = \sqrt{(1-\lambda x)(1-\mu x)}$
    and with
    \begin{equation*}
        G(x) = \frac{1 - \frac{\lambda + \mu}{2}x - f}{x^2} .
    \end{equation*}
Because
    \begin{align*}
       f\cdot G &= \frac{(1-\frac{\lambda+ \mu}{2}x) f - (1-\lambda x)(1-\mu x)}{x^2}\\
       &= -G - \lambda \mu + \frac{(\lambda+\mu)(1-f)}{2x}\\
       &= -G - \lambda \mu + \frac{(\lambda+\mu)(1-\frac{\lambda+\mu}{2}x-f)}{2x} + \frac{(\lambda+\mu) \frac{\lambda+\mu}{2} x}{2x} \\
       &= \Big(\frac{\lambda+\mu}{2}x - 1\Big) G + \frac{(\lambda-\mu)^2}{4}.
    \end{align*}
    \Cref{lemma:ab_gf_full_rank} and \Cref{lem:comp} imply that $(H[f])^{(d,d)}$ has co-rank at most three.
    \Cref{lem:comp} implies that $(H_{\lambda,\mu})^{(d,d)}$ has co-rank at most five.
\end{proof}

\begin{proof}[Proof of \Cref{thm:g_one_half_weighted_is_hard}]
   The theorem follows from \Cref{cor:harder-sqrt} and \Cref{lemma:small_buffer_small_rank}.
\end{proof}

\section{Basic properties}

\subsection{Algebraic decompositions do not help}

\begin{proof}[Proof of \Cref{lem:robust-decomposition}]
    We prove the statement for factorizations---the statement for sums follows similarly.
    The observation is that algorithms for solving the streaming problem on $L$ and $R$ yield an algorithm for solving $M$.
    Assume we are given the sequences of maps $u^{(t)}_L, m^{(t)}_L$ for $L$ and $u^{(t)}_R, m^{(t)}_R$ for $R$. On input $z$, denote the corresponding states by $s^{(t)}_L(z) \in \mathbb{R}^{\beta_L(t)}$ and $s^{(t)}_R(z) \in \mathbb{R}^{\beta_R(t)}$.
    By the correctness for $R$, we know
    $m^{(t)}_R\big(s^{(t)}_R\big) = (R z)_{t}$.
    We get an algorithm for $M$ via
    \begin{equation*}
        s^{(t)}_M(z) = (s^{(t)}_L(Rz),s^{(t)}_R(z))\in \mathbb{R}^{\beta_L(t) + \beta_R(t)} 
    \end{equation*}    
where
    \begin{equation*}
         u_M^{(t)}(s^{(t-1)}_M, z_t) \coloneqq \left(u^{(t)}_L\Big(s^{(t-1)}_L, m^{(t)}_R\big(s^{(t)}_R(z)\big)\Big), s^{(t)}_R(z) \Big)\right)
    \end{equation*}
    and
    \begin{equation*}
        m^{(t)}_M(s^{(t)}_M) = m^{(t)}_L(s^{(t)}_L(Rz)) .
    \end{equation*}
    The only property of the algorithm that warrants checking is the continuity of $s^{(t)}_M$.  This follows from the linearity of $m^{(t)}_R\big(s^{(t)}_R(z)\big) = (Rz)_{t}$.
\end{proof}
\subsection{Algebraic algorithms are linear}

An algebraic streaming algorithm is defined by a sequence
of polynomial maps $\smap{t}$ and $\outmap{t}$.
We now explain why we can assume that in this case, the maps $\smap{t}$ and $\outmap{t}$ are linear.
For a multivariate polynomial $p$ and $d \in \N$, denote by $\mathop{\mathcal{H}_d}(p)$ the homogeneous part of degree $d$ in~$p$
(it is a linear functional).

\begin{lemma}\label{lem:lin_pol}
Let $f: \R^n \to \R^k$ and $g : \R^k \to \R$ be two polynomial maps. 
Write $f = (f_1,\ldots,f_k)$,
where each $f_i : \R^n \to \R$ is a polynomial map. 
Then, there are linear maps $L_f: \R^n \to \R^k$ and $L_g : \R^k \to \R$ so that
$$L_g \circ L_f = \mathop{\mathcal{H}_1}(g \circ f).$$
In addition, $(L_f)_i = \mathop{\mathcal{H}_1}(f_i)$ for all $i \in [k]$.
\end{lemma}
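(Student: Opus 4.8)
The plan is to read the lemma as a polynomial version of the chain rule: the degree-one part of $g \circ f$ ought to be the differential of $g$, evaluated at the base point $f(0) \in \R^k$, composed with the differential of $f$ at the origin. This already dictates the construction. We are essentially forced to set $(L_f)_i = \mathcal{H}_1(f_i)$, which is the ``in addition'' clause, and then $L_g$ must be the linear functional on $\R^k$ given by the gradient of $g$ at the constant vector $b := f(0)$.

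First I would fix notation: let $b = f(0)$, and for each $i \in [k]$ write $f_i = b_i + \ell_i + h_i$, where $\ell_i := \mathcal{H}_1(f_i)$ is the linear part and $h_i$ is the sum of the monomials of $f_i$ of degree at least $2$; in particular $f_i - b_i = \ell_i + h_i$ vanishes at the origin. Put $L_f := (\ell_1,\dots,\ell_k)$. Next I would use the (finite) Taylor expansion of the polynomial $g$ about the point $b$,
\[
 g(y) \;=\; g(b) \;+\; \sum_{j=1}^{k} (\partial_j g)(b)\,(y_j - b_j) \;+\; R(y),
\]
where $R$ is a polynomial every monomial of which, expressed in the shifted variables $y_j - b_j$, has degree at least $2$. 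Substituting $y = f(x)$ and using that each $y_j - b_j = f_j(x) - f_j(0)$ has no constant term, every monomial of $R(f(x))$ has a zero of order at least $2$ at $x = 0$, so it does not contribute to $\mathcal{H}_1$; the term $g(b)$ is a constant; and $(\partial_j g)(b)\,(f_j(x) - b_j)$ contributes exactly $(\partial_j g)(b)\,\ell_j$ to the degree-one part. Hence, by linearity of $\mathcal{H}_1$,
\[
 \mathcal{H}_1(g \circ f) \;=\; \sum_{j=1}^{k} (\partial_j g)(b)\, \ell_j .
\]
I would then define $L_g : \R^k \to \R$ by $L_g(v) = \sum_{j=1}^{k} (\partial_j g)(b)\, v_j$, which is linear; by construction $L_g \circ L_f = \sum_j (\partial_j g)(b)\, \ell_j = \mathcal{H}_1(g \circ f)$ and $(L_f)_i = \ell_i = \mathcal{H}_1(f_i)$, as claimed.

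All the computation is routine; the one point that needs care --- and where a too-hasty argument fails --- is that the linearization of $g$ must be taken at $f(0)$, not at $0$: the constant terms $f_i(0)$ enter through $\nabla g$ nonlinearly, which is exactly why $L_g$ depends on $f$ and not only on $g$. I would also record at the start the two elementary facts being used, namely that $p \mapsto \mathcal{H}_1(p)$ is a linear operation on polynomials and that a polynomial vanishing to order at least $2$ at the origin has zero degree-$0$ and degree-$1$ homogeneous parts; both are immediate from uniqueness of the decomposition of a polynomial into homogeneous components.
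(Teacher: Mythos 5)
Your proof is correct and follows essentially the same route as the paper's: both discard the degree-$\ge 2$ part of $f$ (which cannot contribute to $\mathcal{H}_1$ of the composition) and then linearize $g$ at the constant vector $f(0)$. Your version is slightly more explicit, identifying $L_g$ as the gradient $\nabla g(f(0))$ via a Taylor expansion where the paper argues monomial-by-monomial and only asserts the existence of intermediate polynomials $q$ and $r$, but the underlying argument is the same.
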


\begin{proof}
For polynomials over $\R^n$,
equality as functions is equivalent to equality as formal polynomials. 
Define the affine map $A_f : \R^n \to \R^k$ by
$$(A_f)_i = \mathop{\mathcal{H}_0}(f_i)+\mathop{\mathcal{H}_1}(f_i).$$
Every monomial in $f_i(x) - (A_f)_i(x)$ has degree at least two.
So, for every $k$-variate monomial $\mu$, it holds that
\begin{align*}
\mathop{\mathcal{H}_1}(\mu(f(x))) 
= \mathop{\mathcal{H}_1}(\mu(f(x)-A_f(x) + A_f(x)))
= \mathop{\mathcal{H}_1}(\mu(A_f(x))) .
\end{align*}
It follows that
$$\mathop{\mathcal{H}_1}(g(f(x))) = \mathop{\mathcal{H}_1}(g(A_f(x))).$$
For every $k$-variate monomial $\mu$, there is a polynomial $q = q_{f,\mu}$ so that
$$\mathop{\mathcal{H}_1}(\mu(A_f(x))) = \mathop{\mathcal{H}_1}(q(L_f(x)))$$
where $(L_f)_i\coloneqq\mathop{\mathcal{H}_1}(f_i)$.
By linearity, there is a polynomial $r = r_{f,g}$ so that 
$$\mathop{\mathcal{H}_1}(g(A_f(x))) = \mathop{\mathcal{H}_1}(r(L_f(x))).$$
If $\mu$ is a monomial of (total) degree larger than one, then $\mu(L_f(x))$ has degree larger than one (unless it is zero).
It follows that
$$\mathop{\mathcal{H}_1}(r(L_f(x))) = L_g(L_f(x))$$
where $L_g = \mathop{\mathcal{H}_1}(r)$.
\end{proof}

\begin{proof}[Proof of \Cref{lem:lin}]
The update of the state at time $t$ is
    \begin{equation*}
        \state{t} = \smap{t}(\state{t-1}, z_{t}).
    \end{equation*}
Let $L_u^{(t)}$ be the linear map obtained with $g=\smap{t}$ in \Cref{lem:lin_pol}.
Inductively define states $\sigma^{(t)}$ by $\sigma^{(-1)} = 0$
and    
\begin{equation*}
        \sigma^{(t)} = L_u^{(t)}(\sigma^{(t-1)}, z_{t}).
    \end{equation*}
It follows by induction and \Cref{lem:lin_pol} that
\begin{align}
\label{eqn:lambda}
\mathop{\mathcal{H}_1}(s^{(t)})
= L_u^{(t)}(\mathop{\mathcal{H}_1}(s^{(t-1)}), z_{t})
= L_u^{(t)}(\sigma^{(t-1)}, z_{t})
= \sigma^{(t)}.
\end{align}
By assumption,
the output at time $t$ is a linear function:
    \begin{equation*}
m^{(t)}(\state{t}) = \mathop{\mathcal{H}_1}(m^{(t)}(\state{t})) .
    \end{equation*}
By \Cref{lem:lin_pol} with $g = m^{(t)}$
and using~\eqref{eqn:lambda}, there is a linear map $L_m^{(t)}$ so that  
    \begin{equation*}
\mathop{\mathcal{H}_1}(m^{(t)}(\state{t})) = L_m^{(t)}(\sigma^{(t)}) . \qedhere
    \end{equation*}
\end{proof}

\subsection{Future inputs do not help.}
\label{sec:future}

Our streaming model can be extended to support reading future inputs, which we do next.
As explained in the introduction, this extension is particularly relevant to the differential privacy setting. 
Throughout this section, given $z\in\R^{\N}$, we use the notation $z_{\geq t} = (z_{t'})_{t'\geq t}$ for the input stream with past truncated at $t\in\N$.

\begin{definition}[streaming model with look-ahead]
   The streaming model with look-ahead allows for the algorithm at time $t$ to also query entries in $z_{> t}$ and branch on their value.
   The resulting maps are represented as
   \begin{equation*}
       \hat{u}^{(t)} : \mathbb{R}^{\beta(t-1)} \times \mathbb{R}^{\mathbb{N}} \to \mathbb{R}^{\beta(t)},
       \quad\hat{m}^{(t)} : \mathbb{R}^{\beta(t)} \times \mathbb{R}^{\mathbb{N}} \to \mathbb{R},
   \end{equation*}
   where the state is updated via $\hat{s}^{(t)} = \hat{u}^{(t)}(\hat{s}^{(t-1)}, z_{\geq t})$ where $\hat{s}^{(-1)} = 0$ and the algorithm outputs $\hat{m}^{(t)}(\hat{s}^{(t)}, z_{>t})$.
The map $\hat{s}^{(t)}$ is assumed continuous as a function in $z_{\leq t}$
(that is, for every fixed $z_{>t}$, the map
$z_{\leq t} \mapsto \hat{s}^{(t)}(z)$ is continuous).
\end{definition}
Denote the corresponding space complexity for this look-ahead model by $\widehat{\cost}_M(t; I)$.
\begin{lemma}[the future is useless]\label{lem:future-no-help}
    Let $M\in\mathbb{R}^{\N \times \N}$ be lower-triangular.
    For all $t, I \geq 0$,
    \begin{equation*}
        \widehat{\cost}_M(t; I) = \cost_M(t; I).
    \end{equation*}
\end{lemma}

The lemma shows that all our lower bounds  extend to this look-ahead model.

\begin{proof}
    Let $\hat{\mathcal{A}} = \{ (\hat{u}^{(t)}, \hat{m}^{(t)}) : t\in\mathbb{N}\}$ define a streaming algorithm for $M$ in the look-ahead streaming model.
    Let $z\in\mathbb{R}^{\mathbb{N}}$, and define $z^{\langle t \rangle}$ as $z$ with $z_{> t}$ set to zero.
    Consider another (non-look-ahead) streaming algorithm $\mathcal{A} = \{ (u^{(t)}, m^{(t)}) : t\in\mathbb{N}\}$ defined via
    \begin{equation*}
        u^{(t)}(s^{(t-1)}, z_t) = \hat{u}^{(t)}(s^{(t-1)}, z^{\langle t \rangle}_{\geq t}),
        \qquad m^{(t)}(s^{(t)}) = \hat{m}^{(t)}(s^{(t)}, z^{\langle t \rangle}_{> t}),
    \end{equation*}
    where both maps are taken from $\hat{A}$ but with the \enquote{future} set to zero.
    
    Consider $\mathcal{A}$ on input $z$ and $\hat{\mathcal{A}}$ on inputs $z^{\langle t \rangle}$ for $t\geq 0$.
    We argue by induction that $s^{(t)}(z) = \hat{s}^{(t)}(z^{\langle t \rangle})$.
   The base case $t=-1$ trivially holds.
    The inductive step is
    \begin{align*}
        s^{(t)}(z)
        = u^{(t)}(s^{(t-1)}, z_t)
        = \hat{u}^{(t)}(\hat{s}^{(t-1)}, z^{\langle t \rangle}_{\geq t})
        = \hat{s}^{(t)}(z^{\langle t \rangle}) .
    \end{align*}
    It follows that $\mathcal{A}$ solves $M$:
    \begin{equation*}
       m^{(t)}(s^{(t)}(z))
       = \hat{m}^{(t)}(\hat{s}^{(t)}(z^{\langle t \rangle}), z^{\langle t \rangle}_{> t})
       = (Mz^{\langle t \rangle})_t
       = (Mz)_t .
    \end{equation*}

        Starting from the look-ahead algorithm $\hat{\mathcal{A}}$, we have constructed the algorithm $\mathcal{A}$ without look-ahead.
    It solves the same problem over all inputs, and has the same buffer size. Hence,
    \begin{equation*}
        \widehat{\cost}_M(t; I) \geq \cost_M(t; I).
    \end{equation*}
    Observing that the class of \enquote{no look-ahead algorithms} is a subset of the class of look-ahead algorithms finishes the proof.
\end{proof}
\begin{remark}
 \Cref{lem:future-no-help} does not rely on continuity of the state map.
\end{remark}

\ifdefined\anonymize
\noindent\textbf{Acknowledgments.} We thank Nikita Kalinin for valuable feedback on a preliminary version of this text.
We are also thankful for the STOC 2026 automated pre-submission feedback system for helpful comments.
\fi

\bibliographystyle{alpha}
\bibliography{main}
\end{document}